\newif\ifaplas
\newcommand{\aplasorlong}[2]{\ifaplas{#1}\else{#2}\fi}
\tikzstyle{env}=[copoint,regular polygon rotate=0,minimum width=0.2cm, fill=black]
\tikzstyle{probs}=[shape=semicircle,fill=white,draw=black,shape border rotate=180,minimum width=1.2cm]
\tikzstyle{every picture}=[baseline=-0.25em,scale=0.5]
\tikzstyle{dotpic}=[] % for backwards-compatibility
\tikzstyle{diredges}=[every to/.style={diredge}]
\tikzstyle{math matrix}=[matrix of math nodes,left delimiter=(,right delimiter=),inner sep=2pt,column sep=1em,row sep=0.5em,nodes={inner sep=0pt},text height=1.5ex, text depth=0.25ex]
\tikzstyle{gs edge}=[]
\tikzstyle{gs double edge}=[double,shorten <=-1mm,shorten >=-1mm,double distance=2pt]
\tikzstyle{inline text}=[text height=1.5ex, text depth=0.25ex,yshift=0.5mm]
\tikzstyle{label}=[font=\footnotesize,text height=1.5ex, text depth=0.25ex,yshift=0.5mm]
\tikzstyle{left label}=[label,anchor=east,xshift=1.5mm]
\tikzstyle{right label}=[label,anchor=west,xshift=-1.5mm]
\tikzstyle{braceedge}=[decorate,decoration={brace,amplitude=2mm,raise=-1mm}]
\tikzstyle{small braceedge}=[decorate,decoration={brace,amplitude=1mm,raise=-1mm}]
\tikzstyle{doubled}=[line width=1.6pt] % set the line width for all doubled (quantum) maps/wires
\tikzstyle{boldedge}=[doubled,shorten <=-0.17mm,shorten >=-0.17mm]
\tikzstyle{boldedgegray}=[doubled,gray,shorten <=-0.17mm,shorten >=-0.17mm]
\tikzstyle{singleedgegray}=[gray]%,shorten <=-0.1mm,shorten >=-0.1mm]
\tikzstyle{semidoubled}=[line width=1.4pt] % set the line width for all doubled (quantum) maps/wires
\tikzstyle{semiboldedgegray}=[semidoubled,gray,shorten <=-0.17mm,shorten >=-0.17mm]
\tikzstyle{boxedge}=[semiboldedgegray]
\tikzstyle{boldedgedashed}=[very thick,dashed,shorten <=-0.17mm,shorten >=-0.17mm]
\tikzstyle{vboldedgedashed}=[doubled,dashed,shorten <=-0.17mm,shorten >=-0.17mm]
\tikzstyle{left hook arrow}=[left hook-latex]
\tikzstyle{right hook arrow}=[right hook-latex]
\tikzstyle{sembracket}=[line width=0.5pt,shorten <=-0.07mm,shorten >=-0.07mm]
\tikzstyle{causal edge}=[->,thick,gray]
\tikzstyle{causal nondir}=[thick,gray]
\tikzstyle{timeline}=[thick,gray, dashed]
\tikzstyle{cedge}=[<->,thick,gray!70!white]
\tikzstyle{empty diagram}=[draw=gray!40!white,dashed,shape=rectangle,minimum width=1cm,minimum height=1cm]
\tikzstyle{empty diagram small}=[draw=gray!50!white,dashed,shape=rectangle,minimum width=0.6cm,minimum height=0.5cm]
\tikzstyle{dot}=[inner sep=0mm,minimum width=2mm,minimum height=2mm,draw,shape=circle]  
\tikzstyle{Wsquare}=[white dot, shape=regular polygon, rounded corners=0.8 mm, minimum size=3.3 mm, regular polygon sides=3, outer sep=-0.2mm]
\tikzstyle{Wsquareadj}=[white dot, shape=regular polygon, rounded corners=0.8 mm, minimum size=3.3 mm, regular polygon sides=3, outer sep=-0.2mm, regular polygon rotate=180]
\tikzstyle{ddot}=[inner sep=0mm, doubled, minimum width=2.5mm,minimum height=2.5mm,draw,shape=circle]
\tikzstyle{black dot}=[dot,fill=black]
\tikzstyle{white dot}=[dot,fill=white,,text depth=-0.2mm]
\tikzstyle{white Wsquare}=[Wsquare,fill=white,,text depth=-0.2mm]
\tikzstyle{white Wsquareadj}=[Wsquareadj,fill=white,,text depth=-0.2mm]
\tikzstyle{green dot}=[white dot] % for backwards-compatibility
\tikzstyle{gray dot}=[dot,fill=gray!40!white,,text depth=-0.2mm]
\tikzstyle{red dot}=[gray dot] % for backwards-compatibility
\tikzstyle{Z}=[white dot]
\tikzstyle{X}=[gray dot]
\tikzstyle{simple}=[]
\tikzstyle{black ddot}=[ddot,fill=black]
\tikzstyle{white ddot}=[ddot,fill=white]
\tikzstyle{gray ddot}=[ddot,fill=gray!40!white]
\tikzstyle{gray edge}=[gray!40!white]
\tikzstyle{small dot}=[inner sep=1pt,minimum width=0pt,minimum height=0pt,draw,shape=circle]
\tikzstyle{small black dot}=[small dot,fill=black]
\tikzstyle{small white dot}=[small dot,fill=white]
\tikzstyle{small gray dot}=[small dot,fill=gray!40!white]
\tikzstyle{special dot} = [small white dot]
\tikzstyle{mbqc dot}=[small black dot]
\tikzstyle{mbqc input dot}=[small white dot]
\tikzstyle{mbqc output dot}=[small gray dot]
\tikzstyle{causal dot}=[inner sep=0.4mm,minimum width=0pt,minimum height=0pt,draw=white,shape=circle,fill=gray!40!white]
\tikzstyle{phase dimensions}=[minimum size=5mm,font=\footnotesize,rectangle,rounded corners=2mm,inner sep=0.2mm,outer sep=-2mm,scale=0.8]
\tikzstyle{dphase dimensions}=[minimum size=5mm,font=\footnotesize,rectangle,rounded corners=2.5mm,inner sep=0.2mm,outer sep=-2mm]
\tikzstyle{white phase dot}=[dot,fill=white,phase dimensions]
\tikzstyle{white phase ddot}=[ddot,fill=white,dphase dimensions]
\tikzstyle{white rect ddot}=[draw=black,fill=white,doubled,minimum size=5mm,font=\footnotesize,rectangle,rounded corners=2.5mm,inner sep=0.2mm]
\tikzstyle{gray rect ddot}=[draw=black,fill=gray!40!white,doubled,minimum size=6mm,font=\footnotesize,rectangle,rounded corners=3mm]
\tikzstyle{gray phase dot}=[dot,fill=gray!40!white,phase dimensions]
\tikzstyle{gray phase ddot}=[ddot,fill=gray!40!white,dphase dimensions]
\tikzstyle{grey phase dot}=[gray phase dot]
\tikzstyle{grey phase ddot}=[gray phase ddot]
\tikzstyle{small phase dimensions}=[minimum size=4mm,font=\tiny,rectangle,rounded corners=2mm,inner sep=0.2mm,outer sep=-2mm]
\tikzstyle{small dphase dimensions}=[minimum size=4mm,font=\tiny,rectangle,rounded corners=2mm,inner sep=0.2mm,outer sep=-2mm]
\tikzstyle{small gray phase dot}=[dot,fill=gray!40!white,small phase dimensions]
\tikzstyle{small gray phase ddot}=[ddot,fill=gray!40!white,small dphase dimensions]
\tikzstyle{small map}=[draw,shape=rectangle,minimum height=4mm,minimum width=4mm,fill=white]
\tikzstyle{cnot}=[fill=white,shape=circle,inner sep=-1.4pt]
\tikzstyle{asym hadamard}=[fill=white,draw,shape=NEbox,inner sep=0.6mm,font=\footnotesize,minimum height=4mm]
\tikzstyle{asym hadamard conj}=[fill=white,draw,shape=NWbox,inner sep=0.6mm,font=\footnotesize,minimum height=4mm]
\tikzstyle{asym hadamard dag}=[fill=white,draw,shape=SEbox,inner sep=0.6mm,font=\footnotesize,minimum height=4mm]
\tikzstyle{hadamard}=[fill=white,draw,inner sep=0.6mm,font=\footnotesize,minimum height=4mm,minimum width=4mm]
\tikzstyle{small hadamard}=[fill=white,draw,inner sep=0.6mm,minimum height=1.5mm,minimum width=1.5mm]
\tikzstyle{small hadamard rotate}=[small hadamard,rotate=45]
\tikzstyle{dhadamard}=[hadamard,doubled]
\tikzstyle{small dhadamard}=[small hadamard,doubled]
\tikzstyle{small dhadamard rotate}=[small hadamard rotate,doubled]
\tikzstyle{antipode}=[white dot,inner sep=0.3mm,font=\footnotesize]
\tikzstyle{scalar}=[diamond,draw,inner sep=0.5pt,font=\small]
\tikzstyle{dscalar}=[diamond,doubled, draw,inner sep=0.5pt,font=\small]
\tikzstyle{small box}=[rectangle,inline text,fill=white,draw,minimum height=5mm,yshift=-0.5mm,minimum width=5mm,font=\small]
\tikzstyle{small gray box}=[small box,fill=gray!30]
\tikzstyle{medium box}=[rectangle,inline text,fill=white,draw,minimum height=5mm,yshift=-0.5mm,minimum width=10mm,font=\small]
\tikzstyle{square box}=[small box] % for backwards-compatibility
\tikzstyle{medium gray box}=[small box,fill=gray!30]
\tikzstyle{semilarge box}=[rectangle,inline text,fill=white,draw,minimum height=5mm,yshift=-0.5mm,minimum width=12.5mm,font=\small]
\tikzstyle{large box}=[rectangle,inline text,fill=white,draw,minimum height=5mm,yshift=-0.5mm,minimum width=15mm,font=\small]
\tikzstyle{large gray box}=[small box,fill=gray!30]
\tikzstyle{Bayes box}=[rectangle,fill=black,draw, minimum height=3mm, minimum width=3mm]
\tikzstyle{gray square point}=[small box,fill=gray!50]
\tikzstyle{dphase box white}=[dhadamard]
\tikzstyle{dphase box gray}=[dhadamard,fill=gray!50!white]
\tikzstyle{phase box white}=[hadamard]
\tikzstyle{phase box gray}=[hadamard,fill=gray!50!white]
\tikzstyle{point}=[regular polygon,regular polygon sides=3,draw,scale=0.75,inner sep=-0.5pt,minimum width=9mm,fill=white,regular polygon rotate=180]
\tikzstyle{copoint}=[regular polygon,regular polygon sides=3,draw,scale=0.75,inner sep=-0.5pt,minimum width=9mm,fill=white]
\tikzstyle{dpoint}=[point,doubled]
\tikzstyle{dcopoint}=[copoint,doubled]
\tikzstyle{wide copoint}=[fill=white,draw,shape=isosceles triangle,shape border rotate=90,isosceles triangle stretches=true,inner sep=0pt,minimum width=1.5cm,minimum height=6.12mm]
\tikzstyle{wide point}=[fill=white,draw,shape=isosceles triangle,shape border rotate=-90,isosceles triangle stretches=true,inner sep=0pt,minimum width=1.5cm,minimum height=6.12mm,yshift=-0.0mm]
\tikzstyle{wide point plus}=[fill=white,draw,shape=isosceles triangle,shape border rotate=-90,isosceles triangle stretches=true,inner sep=0pt,minimum width=1.74cm,minimum height=7mm,yshift=-0.0mm]
\tikzstyle{wide dpoint}=[fill=white,doubled,draw,shape=isosceles triangle,shape border rotate=-90,isosceles triangle stretches=true,inner sep=0pt,minimum width=1.5cm,minimum height=6.12mm,yshift=-0.0mm]
\tikzstyle{tinypoint}=[regular polygon,regular polygon sides=3,draw,scale=0.55,inner sep=-0.15pt,minimum width=6mm,fill=white,regular polygon rotate=180] 
\tikzstyle{white point}=[point]
\tikzstyle{white dpoint}=[dpoint]
\tikzstyle{green point}=[white point] % for backwards-compatibility
\tikzstyle{white copoint}=[copoint]
\tikzstyle{gray point}=[point,fill=gray!40!white]
\tikzstyle{gray dpoint}=[gray point,doubled]
\tikzstyle{red point}=[gray point] % for backwards-compatibility
\tikzstyle{gray copoint}=[copoint,fill=gray!40!white]
\tikzstyle{gray dcopoint}=[gray copoint,doubled]
\tikzstyle{white point guide}=[regular polygon,regular polygon sides=3,font=\scriptsize,draw,scale=0.65,inner sep=-0.5pt,minimum width=9mm,fill=white,regular polygon rotate=180]
\tikzstyle{black point}=[point,fill=black,font=\color{white}]
\tikzstyle{black copoint}=[copoint,fill=black,font=\color{white}]
\tikzstyle{tiny gray point}=[tinypoint,fill=gray!40!white]
\tikzstyle{diredge}=[->]
\tikzstyle{ddiredge}=[<->]
\tikzstyle{rdiredge}=[<-]
\tikzstyle{thickdiredge}=[->, very thick]
\tikzstyle{pointer edge}=[->,very thick,gray]
\tikzstyle{pointer edge part}=[very thick,gray]
\tikzstyle{dashed edge}=[dashed]
\tikzstyle{thick dashed edge}=[very thick,dashed]
\tikzstyle{thick gray dashed edge}=[thick dashed edge,gray!40]
\tikzstyle{thick map edge}=[very thick,|->]
\newcommand{\boxshape}[3]{%
\pgfdeclareshape{#1}{
\inheritsavedanchors[from=rectangle] % this is nearly a rectangle
\inheritanchorborder[from=rectangle]
\inheritanchor[from=rectangle]{center}
\inheritanchor[from=rectangle]{north}
\inheritanchor[from=rectangle]{south}
\inheritanchor[from=rectangle]{west}
\inheritanchor[from=rectangle]{east}
% ... and possibly more
\backgroundpath{% this is new
% store lower right in xa/ya and upper right in xb/yb
\southwest \pgf@xa=\pgf@x \pgf@ya=\pgf@y
\northeast \pgf@xb=\pgf@x \pgf@yb=\pgf@y

\@tempdima=#2
\@tempdimb=#3

\pgfpathmoveto{\pgfpoint{\pgf@xa - 5pt + \@tempdima}{\pgf@ya}}
\pgfpathlineto{\pgfpoint{\pgf@xa - 5pt - \@tempdima}{\pgf@yb}}
\pgfpathlineto{\pgfpoint{\pgf@xb + 5pt + \@tempdimb}{\pgf@yb}}
\pgfpathlineto{\pgfpoint{\pgf@xb + 5pt - \@tempdimb}{\pgf@ya}}
\pgfpathlineto{\pgfpoint{\pgf@xa - 5pt + \@tempdima}{\pgf@ya}}
\pgfpathclose
}
}}
\tikzstyle{cloud}=[shape=cloud,draw,minimum width=1.5cm,minimum height=1.5cm]
\tikzstyle{map}=[draw,shape=NEbox,inner sep=2pt,minimum height=6mm,fill=white]
\tikzstyle{dashedmap}=[draw,dashed,shape=NEbox,inner sep=2pt,minimum height=6mm,fill=white]
\tikzstyle{mapdag}=[draw,shape=SEbox,inner sep=2pt,minimum height=6mm,fill=white]
\tikzstyle{mapadj}=[draw,shape=SEbox,inner sep=2pt,minimum height=6mm,fill=white]
\tikzstyle{maptrans}=[draw,shape=SWbox,inner sep=2pt,minimum height=6mm,fill=white]
\tikzstyle{mapconj}=[draw,shape=NWbox,inner sep=2pt,minimum height=6mm,fill=white]
\tikzstyle{medium map}=[draw,shape=NEbox,inner sep=2pt,minimum height=6mm,fill=white,minimum width=7mm]
\tikzstyle{medium map dag}=[draw,shape=SEbox,inner sep=2pt,minimum height=6mm,fill=white,minimum width=7mm]
\tikzstyle{medium map adj}=[draw,shape=SEbox,inner sep=2pt,minimum height=6mm,fill=white,minimum width=7mm]
\tikzstyle{medium map trans}=[draw,shape=SWbox,inner sep=2pt,minimum height=6mm,fill=white,minimum width=7mm]
\tikzstyle{medium map conj}=[draw,shape=NWbox,inner sep=2pt,minimum height=6mm,fill=white,minimum width=7mm]
\tikzstyle{semilarge map}=[draw,shape=NEbox,inner sep=2pt,minimum height=6mm,fill=white,minimum width=9.5mm]
\tikzstyle{semilarge map trans}=[draw,shape=SWbox,inner sep=2pt,minimum height=6mm,fill=white,minimum width=9.5mm]
\tikzstyle{semilarge map adj}=[draw,shape=SEbox,inner sep=2pt,minimum height=6mm,fill=white,minimum width=9.5mm]
\tikzstyle{semilarge map dag}=[draw,shape=SEbox,inner sep=2pt,minimum height=6mm,fill=white,minimum width=9.5mm]
\tikzstyle{semilarge map conj}=[draw,shape=NWbox,inner sep=2pt,minimum height=6mm,fill=white,minimum width=9.5mm]
\tikzstyle{large map}=[draw,shape=NEbox,inner sep=2pt,minimum height=6mm,fill=white,minimum width=12mm]
\tikzstyle{large map conj}=[draw,shape=NWbox,inner sep=2pt,minimum height=6mm,fill=white,minimum width=12mm]
\tikzstyle{very large map}=[draw,shape=NEbox,inner sep=2pt,minimum height=6mm,fill=white,minimum width=17mm]
\tikzstyle{medium dmap}=[draw,doubled,shape=NEbox,inner sep=2pt,minimum height=6mm,fill=white,minimum width=7mm]
\tikzstyle{medium dmap dag}=[draw,doubled,shape=SEbox,inner sep=2pt,minimum height=6mm,fill=white,minimum width=7mm]
\tikzstyle{medium dmap adj}=[draw,doubled,shape=SEbox,inner sep=2pt,minimum height=6mm,fill=white,minimum width=7mm]
\tikzstyle{medium dmap trans}=[draw,doubled,shape=SWbox,inner sep=2pt,minimum height=6mm,fill=white,minimum width=7mm]
\tikzstyle{medium dmap conj}=[draw,doubled,shape=NWbox,inner sep=2pt,minimum height=6mm,fill=white,minimum width=7mm]
\tikzstyle{semilarge dmap}=[draw,doubled,shape=NEbox,inner sep=2pt,minimum height=6mm,fill=white,minimum width=9.5mm]
\tikzstyle{semilarge dmap trans}=[draw,doubled,shape=SWbox,inner sep=2pt,minimum height=6mm,fill=white,minimum width=9.5mm]
\tikzstyle{semilarge dmap adj}=[draw,doubled,shape=SEbox,inner sep=2pt,minimum height=6mm,fill=white,minimum width=9.5mm]
\tikzstyle{semilarge dmap dag}=[draw,doubled,shape=SEbox,inner sep=2pt,minimum height=6mm,fill=white,minimum width=9.5mm]
\tikzstyle{semilarge dmap conj}=[draw,doubled,shape=NWbox,inner sep=2pt,minimum height=6mm,fill=white,minimum width=9.5mm]
\tikzstyle{large dmap}=[draw,doubled,shape=NEbox,inner sep=2pt,minimum height=6mm,fill=white,minimum width=12mm]
\tikzstyle{large dmap conj}=[draw,doubled,shape=NWbox,inner sep=2pt,minimum height=6mm,fill=white,minimum width=12mm]
\tikzstyle{large dmap trans}=[draw,doubled,shape=SWbox,inner sep=2pt,minimum height=6mm,fill=white,minimum width=12mm]
\tikzstyle{large dmap adj}=[draw,doubled,shape=SEbox,inner sep=2pt,minimum height=6mm,fill=white,minimum width=12mm]
\tikzstyle{large dmap dag}=[draw,doubled,shape=SEbox,inner sep=2pt,minimum height=6mm,fill=white,minimum width=12mm]
\tikzstyle{very large dmap}=[draw,doubled,shape=NEbox,inner sep=2pt,minimum height=6mm,fill=white,minimum width=19.5mm]
\tikzstyle{muxbox}=[draw,shape=rectangle,minimum height=3mm,minimum width=3mm,fill=white]
\tikzstyle{dmuxbox}=[muxbox,doubled]
\tikzstyle{box}=[draw,shape=rectangle,inner sep=2pt,minimum height=6mm,minimum width=6mm,fill=white]
\tikzstyle{dbox}=[draw,doubled,shape=rectangle,inner sep=2pt,minimum height=6mm,minimum width=6mm,fill=white]
\tikzstyle{dmap}=[draw,doubled,shape=NEbox,inner sep=2pt,minimum height=6mm,fill=white]
\tikzstyle{dmapdag}=[draw,doubled,shape=SEbox,inner sep=2pt,minimum height=6mm,fill=white]
\tikzstyle{dmapadj}=[draw,doubled,shape=SEbox,inner sep=2pt,minimum height=6mm,fill=white]
\tikzstyle{dmaptrans}=[draw,doubled,shape=SWbox,inner sep=2pt,minimum height=6mm,fill=white]
\tikzstyle{dmapconj}=[draw,doubled,shape=NWbox,inner sep=2pt,minimum height=6mm,fill=white]
\tikzstyle{ddmap}=[draw,doubled,dashed,shape=NEbox,inner sep=2pt,minimum height=6mm,fill=white]
\tikzstyle{ddmapdag}=[draw,doubled,dashed,shape=SEbox,inner sep=2pt,minimum height=6mm,fill=white]
\tikzstyle{ddmapadj}=[draw,doubled,dashed,shape=SEbox,inner sep=2pt,minimum height=6mm,fill=white]
\tikzstyle{ddmaptrans}=[draw,doubled,dashed,shape=SWbox,inner sep=2pt,minimum height=6mm,fill=white]
\tikzstyle{ddmapconj}=[draw,doubled,dashed,shape=NWbox,inner sep=2pt,minimum height=6mm,fill=white]
\tikzstyle{smap}=[draw,shape=sNEbox,fill=white]
\tikzstyle{smapdag}=[draw,shape=sSEbox,fill=white]
\tikzstyle{smapadj}=[draw,shape=sSEbox,fill=white]
\tikzstyle{smaptrans}=[draw,shape=sSWbox,fill=white]
\tikzstyle{smapconj}=[draw,shape=sNWbox,fill=white]
\tikzstyle{dsmap}=[draw,dashed,shape=sNEbox,fill=white]
\tikzstyle{dsmapdag}=[draw,dashed,shape=sSEbox,fill=white]
\tikzstyle{dsmaptrans}=[draw,dashed,shape=sSWbox,fill=white]
\tikzstyle{dsmapconj}=[draw,dashed,shape=sNWbox,fill=white]
\tikzstyle{mmap}=[draw,shape=mNEbox]
\tikzstyle{mmapdag}=[draw,shape=mSEbox]
\tikzstyle{mmaptrans}=[draw,shape=mSWbox]
\tikzstyle{mmapconj}=[draw,shape=mNWbox]
\tikzstyle{mmapgray}=[draw,fill=gray!40!white,shape=mNEbox]
\tikzstyle{smapgray}=[draw,fill=gray!40!white,shape=sNEbox]
\pgfmathsetmacro{\pgf@shorten@left}{\pgfkeysvalueof{/tikz/shorten left}}
\pgfmathsetmacro{\pgf@shorten@right}{\pgfkeysvalueof{/tikz/shorten right}}
\pgfmathsetmacro{\pgf@shorten@left}{\pgfkeysvalueof{/tikz/shorten left}}
\pgfmathsetmacro{\pgf@shorten@right}{\pgfkeysvalueof{/tikz/shorten right}}
\tikzstyle{kpoint common}=[draw,fill=white,inner sep=1pt,minimum height=4mm]
\tikzstyle{kpoint sc}=[shape=cornerpoint,kpoint common]
\tikzstyle{kpoint adjoint sc}=[shape=cornercopoint,kpoint common]
\tikzstyle{kpoint}=[shape=cornerpoint,shorten left=5pt,kpoint common]
\tikzstyle{kpoint adjoint}=[shape=cornercopoint,shorten left=5pt,kpoint common]
\tikzstyle{kpoint conjugate}=[shape=cornerpoint,shorten right=5pt,kpoint common]
\tikzstyle{kpoint transpose}=[shape=cornercopoint,shorten right=5pt,kpoint common]
\tikzstyle{kpoint symm}=[shape=cornerpoint,shorten left=5pt,shorten right=5pt,kpoint common]
\tikzstyle{black kpoint}=[shape=cornerpoint,shorten left=5pt,kpoint common,fill=black,font=\color{white}]
\tikzstyle{black kpoint adjoint}=[shape=cornercopoint,shorten left=5pt,kpoint common,fill=black,font=\color{white}]
\tikzstyle{black kpointadj}=[shape=cornercopoint,shorten left=5pt,kpoint common,fill=black,font=\color{white}]
\tikzstyle{black dkpoint}=[shape=cornerpoint,shorten left=5pt,kpoint common,fill=black, doubled,font=\color{white}]
\tikzstyle{black dkpoint adjoint}=[shape=cornercopoint,shorten left=5pt,kpoint common,fill=black, doubled,font=\color{white}]
\tikzstyle{black dkpointadj}=[shape=cornercopoint,shorten left=5pt,kpoint common,fill=black, doubled,font=\color{white}] 
\tikzstyle{kpointdag}=[kpoint adjoint]
\tikzstyle{kpointadj}=[kpoint adjoint]
\tikzstyle{kpointconj}=[kpoint conjugate]
\tikzstyle{kpointtrans}=[kpoint transpose]
\tikzstyle{big kpoint}=[kpoint, minimum width=1.2 cm, minimum height=8mm, inner sep=4pt, text depth=3mm]
\tikzstyle{wide kpoint}=[kpoint, minimum width=1 cm, inner sep=2pt]%, text depth=-0.7 mm]
\tikzstyle{wide kpointdag}=[kpointdag, minimum width=1 cm, inner sep=2pt]%, text depth=0.7 mm]
\tikzstyle{wide kpointconj}=[kpointconj, minimum width=1 cm, inner sep=2pt]%, text depth=-0.7 mm]
\tikzstyle{wide kpointtrans}=[kpointtrans, minimum width=1 cm, inner sep=2pt]%, text depth=0.7 mm]
\tikzstyle{gray kpoint}=[kpoint,fill=gray!50!white]
\tikzstyle{gray kpointdag}=[kpointdag,fill=gray!50!white]
\tikzstyle{gray kpointadj}=[kpointadj,fill=gray!50!white]
\tikzstyle{gray kpointconj}=[kpointconj,fill=gray!50!white]
\tikzstyle{gray kpointtrans}=[kpointtrans,fill=gray!50!white]
\tikzstyle{gray dkpoint}=[kpoint,fill=gray!50!white,doubled]
\tikzstyle{gray dkpointdag}=[kpointdag,fill=gray!50!white,doubled]
\tikzstyle{gray dkpointadj}=[kpointadj,fill=gray!50!white,doubled]
\tikzstyle{gray dkpointconj}=[kpointconj,fill=gray!50!white,doubled]
\tikzstyle{gray dkpointtrans}=[kpointtrans,fill=gray!50!white,doubled]
\tikzstyle{white label}=[draw,fill=white,rectangle,inner sep=0.7 mm]
\tikzstyle{gray label}=[draw,fill=gray!50!white,rectangle,inner sep=0.7 mm]
\tikzstyle{black label}=[draw,fill=black,rectangle,inner sep=0.7 mm]
\tikzstyle{dkpoint}=[kpoint,doubled]
\tikzstyle{wide dkpoint}=[wide kpoint,doubled]
\tikzstyle{dkpointdag}=[kpoint adjoint,doubled]
\tikzstyle{wide dkpointdag}=[wide kpointdag,doubled]
\tikzstyle{dkcopoint}=[kpoint adjoint,doubled]
\tikzstyle{dkpointadj}=[kpoint adjoint,doubled]
\tikzstyle{dkpointconj}=[kpoint conjugate,doubled]
\tikzstyle{dkpointtrans}=[kpoint transpose,doubled]
\tikzstyle{kscalar}=[kpoint common, shape=EBox, inner xsep=-1pt, inner ysep=3pt,font=\small]
\tikzstyle{kscalarconj}=[kpoint common, shape=WBox, inner xsep=-1pt, inner ysep=3pt,font=\small]
\tikzstyle{spekpoint}=[kpoint sc,minimum height=5mm,inner sep=3pt]
\tikzstyle{spekcopoint}=[kpoint adjoint sc,minimum height=5mm,inner sep=3pt]
\tikzstyle{dspekpoint}=[spekpoint,doubled]
\tikzstyle{dspekcopoint}=[spekcopoint,doubled]
 \tikzstyle{upground}=[circuit ee IEC,thick,ground,rotate=90,scale=1.0]
 \tikzstyle{downground}=[circuit ee IEC,thick,ground,rotate=-90,scale=1.0]
 \tikzstyle{bigground}=[regular polygon,regular polygon sides=3,draw=gray,scale=0.50,inner sep=-0.5pt,minimum width=10mm,fill=gray]
\tikzstyle{arrs}=[-latex,font=\small,auto]
\tikzstyle{arrow plain}=[arrs]
\tikzstyle{arrow dashed}=[dashed,arrs]
\tikzstyle{arrow bold}=[very thick,arrs]
\tikzstyle{arrow hide}=[draw=white!0,-]
\tikzstyle{arrow reverse}=[latex-]
\tikzstyle{cdnode}=[]
\tikzstyle{box}=[shape=rectangle, text height=1.5ex, text depth=0.25ex, yshift=0.5mm, fill=white, draw=black, minimum height=5mm, yshift=-0.5mm, minimum width=5mm, font={\small}]
\tikzstyle{Z dot}=[inner sep=0mm, minimum size=2mm, shape=circle, draw=black, fill={rgb,255: red,221; green,255; blue,221}]
\tikzstyle{Z phase dot}=[minimum size=5mm, font={\footnotesize\boldmath}, shape=rectangle, rounded corners=2mm, inner sep=0.2mm, outer sep=-2mm, scale=0.8, tikzit shape=circle, draw=black, fill={rgb,255: red,221; green,255; blue,221}, tikzit draw=blue]
\tikzstyle{X dot}=[Z dot, shape=circle, draw=black, fill={rgb,255: red,255; green,136; blue,136}]
\tikzstyle{X phase dot}=[Z phase dot, tikzit shape=circle, tikzit draw=blue, fill={rgb,255: red,255; green,136; blue,136}, font={\footnotesize\boldmath}]
\tikzstyle{hadamard}=[fill=yellow, draw=black, shape=rectangle, inner sep=0.6mm, minimum height=1.5mm, minimum width=1.5mm]
\tikzstyle{vertex}=[inner sep=0mm, minimum size=1mm, shape=circle, draw=black, fill=black]
\tikzstyle{vertex set}=[inner sep=0mm, minimum size=1mm, shape=circle, draw=black, fill=white, font={\footnotesize\boldmath}]
\tikzstyle{ground}=[ground]
\tikzstyle{hadamard edge}=[-, dashed, dash pattern=on 2pt off 0.5pt, thick, draw={rgb,255: red,68; green,136; blue,255}]
\tikzstyle{brace edge}=[-, tikzit draw=blue, decorate, decoration={brace,amplitude=1mm,raise=-1mm}]
\tikzstyle{diredge}=[->]
\tikzstyle{border edge}=[-, dashed, dash pattern=on 2pt off 0.5pt, thick, draw={rgb,255: red,255; green,13; blue,20}]
\tikzstyle{bit clone}=[small black dot]
\tikzstyle{qubit edge}=[-]
\tikzstyle{bit edge}=[-, double distance=1.5pt]
\tikzstyle{bang box}=[shape=rectangle, text height=1.5ex, text depth=0.25ex, yshift=0.5mm, draw=black, minimum height=10mm, yshift=-0.5mm, minimum width=10mm, font={\small},
\newcommand{\bigo}{\ensuremath{\mathcal{O}}}
\newcommand{\C}{\ensuremath{\mathbb{C}}}
\newcommand{\Dn}[1]{\ensuremath{\mathcal{D}_{#1}}} % Set of n-qubit density matrices
\newcommand{\R}{\ensuremath{\mathbb{R}}}
\newcommand{\ketbra}[2]{\ket{#1}\!\!\bra{#2}}
\newcommand{\ketR}{\ket{\circlearrowright}}
\newcommand{\ketL}{\ket{\circlearrowleft}}
\newcommand{\ground}{%
	\begin{tikzpicture}[circuit ee IEC,yscale=1.0,xscale=1.0]
	\draw[solid,arrows=-] (0,1ex) to (0,0) node[anchor=center,ground,rotate=-90,xshift=.66ex] {};
	\end{tikzpicture}
}
\newcommand{\zxGND}{\ensuremath{\text{ZX}_{\ground}}}
\newcommand{\zxGNDsafe}{\texorpdfstring{\zxGND}{ZX-ground}}
\newcommand{\interpretZX}[1]{\left\llbracket{#1}\right\rrbracket}
\newcommand{\toZxGND}[1]{T({#1})}
\newcommand{\outputs}{\mathsf{O}}
\newcommand{\inputs}{\mathsf{I}}
\newcommand{\AND}{\ensuremath{\text{AND}}}
\newcommand{\OR}{\ensuremath{\text{OR}}}
\newcommand{\NOT}{\ensuremath{\text{NOT}}}
\newcommand{\XOR}{\ensuremath{\text{XOR}}}
\newcommand{\CNOT}{\ensuremath{\text{CNOT}}}
\renewcommand{\H}{\ensuremath{\text{H}}}
\newcommand{\Z}[1][]{\ensuremath{\text{Z}_{#1}}}
\newcommand{\Rx}[1][]{\ensuremath{\text{X}_{#1}}}
\newcommand{\CZ}{\ensuremath{\text{CZ}}}
\newcommand{\odd}{\mathsf{Odd}}
\newcommand{\frontier}{\mathsf{F}}
\newcommand{\fQubit}{\mathsf{Q}}
\newcommand{\labels}{\ensuremath{\mathcal{L}}}
\newcommand{\Q}{\textsf{Q}}
\newcommand{\X}{\textsf{X}}
\newcommand{\Y}{\textsf{Y}}
\newcommand{\tJoin}{\star}
\newcommand{\rot}[1][\alpha]{\text{rot}_{#1}}
\DeclareMathOperator{\hadamLbl}{\H}
\newcommand{\interpretLbl}[1]{#1}
\newcommand{\ClassHadamRule}{\ensuremath{(\bm{ch})}}
\newcommand{\ClassZRule}{\ensuremath{(\bm{cz})}}
\definecolor{orange}{RGB}{255,165,0}
\spnewtheorem*{lemma*}{Lemma}{\bfseries}{\rmfamily}
\begin{document}
\title{Hybrid quantum-classical circuit simplification with the ZX-calculus}
%
%\titlerunning{Abbreviated paper title} If the paper title is too long for the
%running head, you can set an abbreviated paper title here
%
\author{Agustín Borgna\inst{1,2}\orcidID{0000-0002-1688-1370} \and Simon
Perdrix\inst{1}\orcidID{0000-0002-1808-2409} \and Benoît
Valiron\inst{3}\orcidID{0000-0002-1008-5605}}
\authorrunning{A. Borgna et al.} % First names are abbreviated in the running
%head.  If there are more than two authors, 'et al.' is used.
%
\institute{ CNRS LORIA, Inria-MOCQUA, Université de Lorraine, F 54000 Nancy,
France \and Université Paris-Saclay, CNRS, Laboratoire Méthodes Formelles,
91405, Orsay, France \and École CentraleSupélec, Laboratoire Méthodes
Formelles, 91405, Orsay, France }
\maketitle              % typeset the header of the contribution
\begin{abstract} We present a complete optimization procedure for hybrid
  quantum-classical circuits with classical parity logic.
  While common optimization techniques for quantum algorithms focus on rewriting
  solely the pure quantum segments, there is interest in applying a global
  optimization process for applications such as quantum error correction and
  quantum assertions.
  This work, based on the pure-quantum circuit optimization procedure by Duncan
  et al., uses an extension of the formal graphical ZX-calculus called
  \zxGNDsafe\ as an intermediary representation of the hybrid circuits to allow
  for granular optimizations below the quantum-gate level.
  We define a translation from hybrid circuits into diagrams that admit the
  graph-theoretical focused-gFlow property, needed for the final extraction back
  into a circuit.
  We then derive a number of gFlow-preserving optimization rules for
  \zxGNDsafe diagrams that reduce the size of the graph, and devise a strategy
  to find optimization opportunities by rewriting the diagram guided by a Gauss
  elimination process.
  Then, after extracting the circuit, we present a general procedure for
  detecting segments of circuit-like \zxGNDsafe\ diagrams which can be
  implemented with classical gates in the extracted circuit.
  We have implemented our optimization procedure as an extension to the
  open-source python library PyZX.

  \keywords{ZX-calculus \and Optimization  \and Gflow \and Hybrid circuits
  \and PyZX}
\end{abstract}

\section{Introduction}%
\label{sec:intro}

The description of quantum algorithms commonly involves quantum operations
interacting with classical data in its inputs, outputs, or intermediary steps
via measurements or state preparations.
Some applications such as quantum error correction~\cite{duncan_verifying_2014,
de_beaudrap_zx_2020} and quantum assertions~\cite{Li_assertions, Zhou_2019}
explicitly introduce classical measurements and logic between quantum
computations.
In general, quantum programming languages usually allow for measurements and
classically controlled quantum operators mixed-in with unitary gates
\citep{Green2013quipper,cross2017openQASM,khammassi2018cQASM,Steiger2018projectQ}.
Furthermore, Jozsa~\cite{jozsa_mbqc} conjectured that any polynomial-time
quantum algorithm can be simulated by polylogarithmic-depth quantum computation
interleaved with polynomial-depth classical computation. As such, there is
interest in contemplating this kind of structures in circuits.

A popular alternative representation of quantum circuit is based on the
\textit{ZX-calculus}~\citep{coecke_interacting_2008,
bob_coeke_aleks_kissinger_picturing_2017}, a formal diagrammatic language which
presents a more granular representation of quantum circuits and has been
successfully used in applications such as MBQC~\cite{duncan2010rewriting},
quantum error correction~\cite{chancellor2016coherent}, and quantum foundations.
Carette et al.~\cite{CJPV19completenessMix} introduced an extension of the
calculus called \zxGND\ which allows for the representation of operations
interacting with the classical environment by adding a discarding \textit{ground
generator} to the diagrams.

It is natural to look at the problem of optimizing algorithm implementations by
taking in consideration the environment in addition to the pure quantum
fragments.  However, most common optimization strategies focus solely on the
latter without contemplating the hybrid quantum-classical
structure~\citep{amy_polynomial-time_2014, heyfron_efficient_2018}. One of this
pure optimizations introduced by Duncan et al.~\cite{DKPW2019qcircSimpl} uses
the ZX-calculus to apply granular rewriting rules that ignore the boundaries of
each quantum gate. We will refer to it as the \textit{Clifford optimization}
algorithm. Their rewriting steps preserve a diagram property called
\textit{gFlow admittance} that is required for the final extraction of the
ZX diagrams into circuits. The ZX optimization method was latter used by
Kissinger and van de Wetering~\cite{aleks_kissinger_reducing_2020} in their
method to reduce the number of T-gates in quantum circuits.

In this work we define the natural extension of the pure Clifford optimization
algorithm by Duncan et al. to hybrid quantum-classical circuits using the
\zxGND\ calculus.

Our circuit optimization procedure forgets the difference between quantum and
classical wires during the simplification process, representing connections as
a single type of edge. This allows it to optimize the complete hybrid system as
an homogeneous diagram, and results in similar representations for operations
that can be done either quantumly or classically. 
Generally, in a physical quantum computer, the classical operations are simpler to
implement than their quantum counterparts, and quantum simulators can exploit
the knowledge of which wires carry classical data to simplify their operation.
As such, it is beneficial to extract classical gates in the resulting circuit
where possible.

The contribution of this paper are as follows.
\begin{itemize}
  \item We specify a translation of hybrid circuits into \zxGND\ diagrams in a
    special \textit{graph like} form that admits a gFlow, restricting the
    classical segments of the input to parity circuits.
  \item We introduce a number of gFlow-preserving rewriting rules that interact
    with the discarding generator to reduce the size of the diagrams, and devise
    a strategy to find optimization opportunities using the biadjacency matrix
    of the graph cut between spiders connected to ground generators and the
    other nodes in the diagram.
  \item We define a procedure to extract \zxGND\ diagrams with a gFlow back
    into hybrid quantum-classical circuits, including ancilla initialization and
    termination.
  \item We define the problem of \zxGND-classicalization as labelling segments
    of the diagrams which can be implemented classically and present an
    heuristic solution. Our method can be applied on the extracted circuits to
    maximize the number of classically implemented operations.
\end{itemize}

The paper is organized in the following manner.
In Section~\ref{sec:zx} we define the quantum circuits, present a syntactic
description of the \zxGND\ calculus and its equation, and give an intuition
behind the representation of hybrid quantum-classical circuits.
Section~\ref{sec:graph-like} then introduces the graph-like family of \zxGND\
diagrams and defines the focused gFlow property over the graphs.  We then define
the translation of quantum circuits into graph-like diagrams in
Section~\ref{sec:translation}.
In Section~\ref{sec:optimization} we introduce the optimization rules and our
strategy for finding rule matches which we use to describe the complete
optimization algorithm. Then in Section~\ref{sec:extraction} we define the
extraction algorithm and finally we present our classicalization procedure in
Section~\ref{sec:classicalization}.
In Section~\ref{sec:implementation} we discuss the results of testing our
procedure on randomly generated circuits.

\section{Hybrid quantum-classical circuits and the grounded ZX-calculus}%
\label{sec:zx}

In pure quantum operations, a single qubit quantum state is represented as a
unitary vector in the Hilbert space $\C^2$. We use Dirac notation to talk about
such vectors and denote an arbitrary state as $\ket{\phi}$.  States can be
be described as a linear combination of vectors in a basis such as the
computational basis $\{\ket{0}, \ket{1}\}$ or the diagonal basis $\{\ket{+},
\ket{-}\}$, where $\ket{\pm} = \frac{1}{\sqrt2}(\ket{0} \pm \ket{1})$.  A third,
less commonly used basis called $Y$ is formed by the vectors $\ketR =
\frac{1}{\sqrt2}(\ket{0} + i \ket{1})$ and $\ketL = \frac{1}{\sqrt2}(\ket{0} - i
\ket{1})$. Qubit spaces can be composed using a tensor product, and we denote
$\ket{\phi\psi} = \ket{\phi} \otimes \ket{\psi}$.

Hybrid quantum-classical systems include classical data, which can be
represented in a qubit space as orthonormal basis vectors (e.g. by representing
a logical 0 as the state $\ket{0}$ and a logical 1 as $\ket{1}$), but
additionally include a \textit{trace} or \textit{measurement} operation, which
probabilistically projects a qubit into a vector in an orthogonal basis. The
resulting probabilistic distribution of pure states is called a mixed state, and
is better represented by a \textit{density matrix}, a positive semi-definite
Hermitian operator of trace one in the $(\C^{2 \times 2})^{\otimes n}$ Hilbert
space, for an $n$-qubit system. Given a probabilistic distribution of pure
states $\{(p_i, \ket{\phi_i})\}$, their density matrix is constructed as $\sum_i
p_i \ketbra{\phi_i}{\phi_i}$, where $\bra{\phi} = \ket{\phi}^\dagger$.

Quantum circuit diagrams consist of horizontal lines carrying each the
information of one qubit, read from right to left, with some attached gates
applying unitary transformations over the qubit states. We use the universal set
of operations $\{\CNOT, \Rx[\alpha], \Z[\alpha], \H\}$ for pure-quantum
diagrams. When $\alpha$ is limited to multiples of $\frac{\pi}{4}$ this roughly
corresponds to the approximately universal Clifford+T group. Some rotation gates
have specific names, such as $\Z[] = \Z[\pi]$, $\Rx[] = \Rx[\pi]$, $\text{S} =
\Z[\frac\pi2]$, $\text{HSH} = \Rx[\frac\pi2]$, and $\text{T} = \Z[\frac\pi4]$.
We additionally include ancilla initialization and termination, and swaps.  The
representation of each mentioned gate is reproduced here.
\[ \tikzfig{circuit/cnot}
  \quad
  \tikzfig{circuit/hadamard}
  \quad
  \tikzfig{circuit/z}
  \quad
  \tikzfig{circuit/x}
  \quad
  \tikzfig{circuit/init}
  \quad
  \tikzfig{circuit/termination}
  \quad
  \tikzfig{circuit/swapQ}
\]
Hybrid circuits represent bit-carrying classical wires using doubled lines and
extend the set of gates with some classical operations such as
$\{\NOT,\XOR,\AND\}$, classical fan-out, bit swaps, measurement, qubit
preparation, and classically controlled versions of the $\Rx[\pi]$ and $\Z[\pi]$
gates. We depict them respectively as follows.
\[\scalebox{0.82}{%
  \tikzfig{circuit/not}
  \;
  \tikzfig{circuit/xor}
  \;
  \tikzfig{circuit/and}
  \;
  \tikzfig{circuit/clone}
  \;
  \tikzfig{circuit/swapC}
  \;
  \tikzfig{circuit/meas}
  \;
  \tikzfig{circuit/prepare}
  \;
  \tikzfig{circuit/notClassic}
  \;
  \tikzfig{circuit/zClassic}
}\]
Circuits are inductively constructed from these generators, wire identities, and
parallel and serial composition, ensuring that only wires of the same type
connect with each other.

In this work we restrict the input to circuits with classical parity logic,
choosing not to include AND gates due to the complexity of their representation
as \zxGND\ diagrams, which might result in the introduction of additional
non-Clifford gates during the extraction procedure (refer to
Section~\ref{sec:discussion} for further discussion).

The ZX-calculus is a formal graphical language which provides a fine-grained
representation of quantum operations. We present a brief introduction to its
definition, including the \zxGND\ extension to represent classical operations.
Refer to~\cite{vandewetering2020zxcalculus} for a complete description of both
calculi.

ZX diagrams representing pure-quantum linear maps are composed by
\textit{wires}, \textit{spiders}, and \textit{Hadamard boxes}. We read the
diagrams from right to left and represent inputs and outputs as open-ended
wires.
%\[\tikzfig{zxGnd/intro-example}\]
%
The Hadamard box $\tikzfig{zxGnd/elem/hadamard}$ swaps the computational and
diagonal basis, mapping $\ket{0}$ to $\ket{+}$, $\ket{1}$ to $\ket{-}$ and vice
versa.
The spiders are arbitrary-degree nodes labelled with a real phase $\alpha \in
[0,2\pi)$ that come in either green or red color, named Z- and X-spiders
respectively.
When $\alpha$ is a multiple $\pi$ or $\frac\pi2$, we call them \textit{Pauli}
or \textit{Clifford}-spiders respectively. 
We refer to the set of spiders connected to outputs and inputs of
the diagram as $\outputs$ and $\inputs$ respectively, and call their members
\textit{output-} and \textit{input-spiders}.

A degree-2 green (resp.\ red) spider corresponds to applying a
\Z[\alpha] (\Rx[\alpha]) operation over a qubit.  Phaseless spiders represents
nodes with phase $0$ and can be interpreted as copying the computational basis
vectors in the case of green spiders, or the diagonal basis vectors for red
spiders.
\[\tikzfig{zxGnd/elem/spiderZ-2-3}\]
Spiders of the same color can be fused together, adding their phases. It is
important to note that the relative position of the nodes in ZX diagrams do
not alter their interpretation, as only the topology matters.
%
%\[\tikzfig{zxGnd/topology}\]
%

The ZX-calculus comes equipped with a complete set of formal rewrite
rules~\cite{jeandel_completeness_2019}. We reproduce it here ignoring scalars.
\[\scalebox{0.82}{\tikzfig{zxGnd/rule/zxRules}}\]

The \zxGND\ calculus~\citep{CJPV19completenessMix} is an extension to the
ZX-calculus which is able to easily describe interactions with the environment.
The diagrams have a standard interpretation as completely positive linear
maps between quantum mixed states\aplasorlong{}{ (cf. Appendix~\ref{sec:semantics} for a
formal description)}.
In addition to the ZX generators and rewrite rules, the calculus introduces a
\textit{ground} generator ($\ground$) which represent the tracing operation, or
the discarding of information. When connected to a degree-3 green spider, this
can correspond to a measurement operation over the computational basis or a
qubit initialization from a bit.
\[
  \tikzfig{circuit/meas}
  \quad\sim\quad
  \tikzfig{zxGnd/elem/spiderZgnd-1-1}
  \quad\sim\quad
  \tikzfig{circuit/prepare}
\]
We refer to the spiders attached to \ground generators as \ground-spiders.
Notice that we use the same kind of wire for both classical and quantum data,
since as previously discussed we can encode the latter as the former.
We will later introduce a method to differentiate between the two types of wire
by using the \ground-spiders in Section~\ref{sec:classicalization}.

\zxGND\ extends the set of rewriting rules with the following additions.
\[
  \tikzfig{zxGnd/rule/numberRule-a}
  \;\overset{\mbox{\scriptsize\GndRemoveRule}}{=}\;
  \tikzfig{unit-diagram}
  \qquad
  \tikzfig{zxGnd/rule/HRule-a}
  \;\overset{\mbox{\scriptsize\GndHadamardRule}}{=}\;
  \tikzfig{zxGnd/rule/HRule-b}
  \qquad
  \tikzfig{zxGnd/rule/MergeRule-a}
  \;\overset{\mbox{\scriptsize\GndDiscardRule}}{=}\;
  \tikzfig{zxGnd/rule/MergeRule-b}
  \qquad
  \tikzfig{zxGnd/rule/CNOTRule-a}
  \;\overset{\mbox{\scriptsize\GndCNOTRule}}{=}\;
  \tikzfig{zxGnd/rule/CNOTRule-b}
\]
Intuitively, the \ground\ generator discards any operation applied over a single
qubit.
Multiple discards can be combined into one vio the following rule, derived from
rules~\GndDiscardRule, \GndCNOTRule, and \GndRemoveRule.
\[
  \tikzfig{zxGnd/rule/DoubleRule-a}
  \;\overset{\mbox{\scriptsize\GndDoubleRule}}{=}\;
  \tikzfig{zxGnd/rule/DoubleRule-b}
\]

For simplicity in our diagrams, we replace solely as notation the Hadamard
boxes with ``Hadamard wires'' drawn in blue, as follows.
\[ \tikzfig{zxGnd/elem/hWire} \;=\; \tikzfig{zxGnd/elem/hWireDef} \]

We introduce two additional derived equations. One to erase duplicated Hadamard
wires, as proven by Duncan et al.~\cite{DKPW2019qcircSimpl}, and another to
discard them, from a combination of rules \GndDiscardRule\ and \GndHadamardRule.
\[
  \tikzfig{zxGnd/rule/DoubleH-a}
  \;\overset{\mbox{\scriptsize\ParHRule}}{=}\;
  \tikzfig{zxGnd/rule/DoubleH-b}
  \qquad\qquad
  \tikzfig{simpl/gnd-rewriting/gndDiscard-a}
  \;\overset{\mbox{\scriptsize\GndHDiscardRule}}{=}\;
  \tikzfig{simpl/gnd-rewriting/gndDiscard-b}
\]

We utilise\ !-box notation~\citep{miller-bakewell_finite_2020} to represent
infinite families of diagrams with segments that can be repeated 0 or more times.
In the following sections it will be useful to use this notation for depicting
more complex diagrams. Here we present an example of its usage.
\[
  \scalebox{0.85}{\tikzfig{zxGnd/bangBox}}
  \;\in\;\{
  \scalebox{0.85}{\tikzfig{zxGnd/bangBox-0}}
  ,\;
  \scalebox{0.85}{\tikzfig{zxGnd/bangBox-1}}
  ,\;
  \scalebox{0.85}{\tikzfig{zxGnd/bangBox-2}}
  ,\;
  \dots
  \}
\]

\section{Graph-like diagrams and focused gFlow}%
\label{sec:graph-like}

A ZX diagram is said to be in \textit{graph-like}
form~\cite{DKPW2019qcircSimpl} when it contains only
Z-spiders connected by Hadamard wires, there are no parallel edges nor
self-loops, and no spider is connected to more than one input or output.
We define the graph-like form for \zxGND\ diagrams and include a weaker version
allowing a node to connect to an input, a ground, and any number of outputs
simultaneously. When defining a translation from quantum circuits into \zxGND\
diagrams it will be simpler to initially generate weakly graph-like diagrams and
transform the final result into the strict version afterwards.
\begin{definition}%
  \label{defn:graph-like} A \zxGND diagram is \textit{graph-like} (respectively
  \textit{weakly graph-like}) when:
  \begin{enumerate}
    \item All spiders are Z-spiders.
    \item Z-spiders are only connected via Hadamard edges.
    \item There are no parallel Hadamard edges or self-loops.
    \item There is no pair of connected \ground-spiders.
    \item Every input, output, or \ground is connected to a Z-spider.
    \item Every Z-spider connected to a \ground has phase 0.
    \item Every Z-spider is connected to
      at most one input, one output, or one \ground
      (at most one input and at most one \ground).
  \end{enumerate}
\end{definition}

\begin{proposition}%
  \label{lem:eq-to-weak-graphLike}
  Every \zxGND diagram is equivalent to a weakly graph-like \zxGND diagram.  Indeed,
  Duncan et al.~\cite{DKPW2019qcircSimpl} proved that any pure-ZX diagram is
  equivalent to a graph-like one. The proof can be extended to weakly graph-like
  \zxGND\ diagrams simply by applying rule $\GndHadamardRule$ to eliminate
  Hadamards connected to \ground generators, rule $\GndDoubleRule$ to eliminate
  duplicated \ground connected to the same spider, and rule $\GndCNOTRule$ to
  disconnect wires between \ground-spiders.
\end{proposition}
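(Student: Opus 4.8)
The statement essentially contains its own proof sketch, so the plan is to flesh it out into an explicit rewriting procedure built on top of the pure-ZX normalization of Duncan et al.\ and to make precise why the ground clean-up steps do not undo the work done on the pure fragment.

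First I would invoke the graph-like normalization for pure ZX on the underlying ZX-structure of the diagram: use the colour-change rule to turn every X-spider into a Z-spider surrounded by Hadamard boxes, fuse adjacent Z-spiders with rule \SpiderRule, insert pairs of Hadamard boxes so that every connection between two Z-spiders becomes a single Hadamard edge, remove duplicated Hadamard edges with \ParHRule\ and self-loops with the relevant identity rules, and insert degree-$2$ Z-spiders on boundary wires so that no spider touches two inputs or two outputs. This gives conditions~1--3 and the input/output part of condition~7, and crucially none of these rewrites mentions a ground generator, so every ground is still present, now incident either directly to a Z-spider or through a Hadamard box (one of those just introduced on a former X-spider leg or boundary wire).

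Next I would bring the grounds into the required shape. Apply rule \GndHadamardRule\ to absorb into the ground any Hadamard box sitting on it, so each ground becomes directly incident to a Z-spider; insert an identity Z-spider whenever a ground, input, or output is left attached to a bare wire, giving condition~5. Apply rule \GndDiscardRule\ to zero out the phase of every ground-spider (condition~6), rule \GndDoubleRule\ to merge any two grounds attached to a common spider into a single one (the ground part of condition~7), and rule \GndCNOTRule\ to delete the edge between any two adjacent ground-spiders, yielding condition~4. Each step is an equality of \zxGND\ diagrams, so the result is equivalent to the original.

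The point needing care --- and the main obstacle --- is that this second phase can locally reintroduce violations of conditions~2--3: merging two grounds identifies their neighbourhoods and may create parallel Hadamard edges or self-loops, deleting an edge between ground-spiders may leave a spider connected only to boundary wires, and inserting identity spiders alters the graph. So the honest argument is an interleaved procedure plus a termination measure. I would take the lexicographic pair consisting of (i) the number of ground generators and (ii) the number of Hadamard, parallel, or self-loop edges incident to ground-spiders plus the number of Hadamard boxes incident to grounds; rule \GndDoubleRule\ strictly decreases (i), the other ground clean-up rules decrease (ii) without increasing (i), and re-running the terminating Duncan et al.\ normalization between ground steps never increases (i). Since the measure is well-founded, the process terminates at a \zxGND\ diagram satisfying all of conditions~1--7 in their weak form and equivalent to the one we started from.
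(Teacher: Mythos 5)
Your proposal is correct and follows essentially the same route as the paper's own (very compressed) argument: run the pure-ZX graph-like normalization of Duncan et al.\ and then clean up the grounds with \GndHadamardRule, \GndDoubleRule, \GndDiscardRule\ and \GndCNOTRule; you even supply the phase-zeroing step via \GndDiscardRule\ for condition~6, which the paper's sketch leaves implicit. The one inaccuracy is your ``main obstacle'': \GndDoubleRule\ does not merge two ground-\emph{spiders} but only deletes a redundant \ground\ generator attached to a single spider, and \GndCNOTRule\ only deletes an edge, so none of the ground clean-up steps can create parallel Hadamard edges or self-loops and the interleaving-plus-termination-measure machinery, while harmless, is not actually needed.
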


\begin{lemma}%
  \label{lem:weak-to-strict-graphlike}
  There exists an algorithm to transform an arbitrary \zxGND diagram into an
  equivalent strictly graph-like diagram.
\end{lemma}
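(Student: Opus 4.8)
The plan is to chain two reductions. First I would invoke Proposition~\ref{lem:eq-to-weak-graphLike} to replace the arbitrary \zxGND\ diagram with an equivalent \emph{weakly} graph-like one, and then give a terminating procedure repairing the only clause in which the two notions differ, namely clause~7 of Definition~\ref{defn:graph-like}: clauses 1--6 are verbatim the same for both variants, so it suffices to eliminate, one at a time, every Z-spider that is simultaneously connected to two or more elements among its incident inputs, outputs, and \ground{}s, while keeping the diagram weakly graph-like throughout.

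For a spider $v$ let its \emph{boundary degree} $b(v)$ count the inputs, outputs, and \ground{}s incident to it; the strict form is exactly ``clauses 1--6 plus $b(v)\le 1$ for all $v$''. The core step would be a local unfusing gadget. If $b(v)\ge 2$, pick one boundary connection $c$ at $v$, detach it from $v$, and reconnect it through a short chain of fresh phaseless Z-spiders linked to $v$ by Hadamard edges: for an input or output wire $c$, introduce two new phase-$0$ Z-spiders $w_1,w_2$ with $c$ attached to $w_1$, a Hadamard edge between $w_1$ and $w_2$, and a Hadamard edge between $w_2$ and $v$; for a \ground, introduce a single new phase-$0$ Z-spider $w$ carrying the \ground\ and a Hadamard edge between $w$ and $v$. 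Each such gadget is provably an equality of \zxGND\ diagrams: deleting the degree-$2$ phaseless spiders via rule~\IdentityRule\ (for the \ground\ case, after absorbing a Hadamard into the \ground\ by rule~\GndHadamardRule) and cancelling the two consecutive Hadamards by rule~\HCancelRule\ returns precisely the original configuration $c\!-\!v$. After the step, $b(v)$ has decreased by one and every new spider has boundary degree at most one; the new Hadamard edge joins $v$ to a fresh vertex, so no parallel edge or self-loop appears (clause 3); the new spiders are Z-spiders (clause 1), only Hadamard-connected (clause 2), and the one bearing a displaced \ground\ is phaseless (clause 6) and adjacent only to non-\ground-spiders (clause 4); every boundary element still meets a Z-spider (clause 5). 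So the diagram remains weakly graph-like.

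Termination would follow from the finite non-negative integer measure $\mu(D)=\sum_{v}\max\bigl(0,\,b(v)-1\bigr)$, which drops by exactly one at each step; once $\mu(D)=0$ all boundary degrees are $\le 1$, which together with clauses 1--6 (invariant throughout) is the strict graph-like form, and the output is equivalent to the input because every gadget insertion was an equality. The part needing the most care -- the only real subtlety -- is making sure that displacing a \ground\ never creates a forbidden pair of adjacent \ground-spiders and never strands a \ground\ on a nonzero-phase spider; this is exactly why displaced \ground{}s are always moved onto brand-new phaseless spiders whose only neighbours are the gadget's own (non-\ground) spiders. The construction is essentially the technique of Duncan et al.~\cite{DKPW2019qcircSimpl} for separating a spider that touches two distinct boundary wires, here extended to also separate \ground\ connections.
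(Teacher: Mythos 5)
Your proposal is correct and follows essentially the same route as the paper: first reduce to weakly graph-like form via Proposition~\ref{lem:eq-to-weak-graphLike}, then peel excess boundary connections off overloaded spiders by inserting identity gadgets of fresh phaseless Z-spiders joined by Hadamard edges (justified by \IdentityRule, \HCancelRule, and \GndHadamardRule). The only cosmetic difference is that the paper detaches all inputs and outputs of a violating spider at once (leaving any \ground\ in place), whereas you detach one boundary element per step with a dedicated single-spider gadget for \ground{}s and an explicit termination measure; both yield the same construction and equivalence argument.
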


\begin{proof}
  By adding identity spiders to the inputs and outputs. Cf.
  \aplasorlong{long version on arXiv}{Appendix~\ref{sec:appendix-proofs}}.
\end{proof}

Once a diagram is in a weakly graph-like form, all its spiders as well as all
its internal connections are of the same kind. We can refer to its underlying
structure as a simple undirected graph, marking the nodes connected to inputs
and outputs. In addition, \ground generators or the \ground-spiders connected to
them can be seen as outputs discarding information into the environment.  This
is known as the underlying open graph of a diagram.

\begin{definition}%
  \label{defn:open-graph}
  An \textit{open graph} is a triple $(G,S,T)$ where $G = (V,E)$ is an
  undirected graph, and $S \subseteq V$ is a set of \textit{sources} and $T
  \subseteq V$ is a set of \textit{sinks}.  For a weakly graph-like
  \zxGND diagram $D$, the \textit{underlying open graph} $G(D)$ is the open
  graph whose vertices are spiders $D$, whose edges correspond to Hadamard
  edges, whose set $S$ is the subset of spiders connected to the inputs of $D$,
  and whose set $T$ is the subset of spiders connected to the outputs of $D$ or
  to ground generators.
\end{definition}

The underlying open graph of a ZX diagram produced from our translation of
quantum circuits verify a graph-theoretic invariant called
\textit{focused gFlow}~\citep{mhallaWhichGraphStates2014}. This
structure ---originally conceived for graph states in measurement based quantum
computation--- gives a notion of flow of information and time on the diagram.
It will be required to guide the extraction strategy in
Section~\ref{sec:extraction}.

\begin{definition}
  Given an open graph $G$, a \textit{focused gFlow} $(g,\prec)$ on $G$ consists of a function
  $g : \overline{T} \to 2^{\overline S}$ and a partial order $\prec$ on the vertices $V$ of $G$
  such that for all $u \in \overline{T}$, $\odd_G(g(u)) \cap \overline{T} = \{u\}$
  and $\forall v \in g(u), u \prec v$
  where $2^{\overline S}$ is the powerset of $\overline{S}$ and
  $\odd_G(A) \coloneqq \{v \in V(G) \;|\; |N(v) \cap A| \equiv 1 \text{ mod } 2\}$
  is the \textit{odd neighbourhood} of $A$.
\end{definition}

\section{Translation of hybrid quantum-classical circuits}%
\label{sec:translation}

We describe our translation from hybrid quantum-classical circuits into strictly
graph-like \zxGND\ diagrams by steps. First, we translate each individual gate
directly into a weakly graph-like diagram and connect them with regular wires.
We define this translation $\toZxGND{\cdot}$ by inductively translating the
gates as described in Table~\ref{table:circ-zx-translation} immediately followed
by the application of the spider fusion rule $\SpiderRule$ and
rules $\GndDoubleRule$ and $\ParHRule$ to remove
all regular wires, duplicated \ground generators, and parallel Hadamard wires,
ensuring that the final combined diagram is in a weakly graph-like form.
An example of this translation is shown in Figure~\ref{fig:example-into-zx}.

Notice that the translation maps both classical and quantum wires to regular
\zxGND diagram edges.  We keep track of which inputs and outputs of the diagram
were connected to classical wires and introduce \ground generators for the
operations that interact with the environment.
In Section~\ref{sec:extraction} we present a method to detect the sections of
the final circuit that can be implemented as classical operations by looking at
the classical inputs/outputs and the \ground generators, independently of which
wires where originally classical.

\begin{table}[bt]%[H]%
  \begin{spreadlines}{1em}
    \begin{gather*}%
      \tikzfig{circuit/cnot}
      \;\mapsto\;%
      \tikzfig{circuit/cnotZX}
      \qquad
      \tikzfig{circuit/hadamard}
      \;\mapsto\;%
      \tikzfig{circuit/hadamardZX}
      \qquad
      \tikzfig{circuit/z}
      \;\mapsto\;%
      \tikzfig{circuit/zZX}
      \\
      \tikzfig{circuit/init}
      \;\mapsto\;%
      \tikzfig{circuit/initZX}
      \qquad
      \tikzfig{circuit/termination}
      \;\mapsto\;%
      \tikzfig{circuit/terminationZX}
      \qquad
      \tikzfig{circuit/xor}
      \;\mapsto\;%
      \tikzfig{circuit/xorZX}
      \\
      \tikzfig{circuit/not}
      \;\mapsto\;%
      \tikzfig{circuit/notZX}
      \qquad
      \tikzfig{circuit/clone}
      \;\mapsto\;%
      \tikzfig{circuit/cloneZX}
      \qquad
      \tikzfig{circuit/meas}
      \;\mapsto\;%
      \tikzfig{circuit/measZX}
      \qquad
      \tikzfig{circuit/prepare}
      \;\mapsto\;%
      \tikzfig{circuit/prepareZX}
      \\
      \tikzfig{circuit/notClassic}
      \;\mapsto\;%
      \tikzfig{circuit/notClassicZX}
      \qquad
      \tikzfig{circuit/zClassic}
      \;\mapsto\;%
      \tikzfig{circuit/zClassicZX}
      \\
      \tikzfig{circuit/wireC}
      \;\mapsto\;%
      \tikzfig{circuit/wireZX}
      \qquad
      \tikzfig{circuit/wireQ}
      \;\mapsto\;%
      \tikzfig{circuit/wireZX}
      \qquad
      \tikzfig{circuit/swapC}
      \;\mapsto\;%
      \tikzfig{circuit/swapZX}
      \qquad
      \tikzfig{circuit/swapQ}
      \;\mapsto\;%
      \tikzfig{circuit/swapZX}
      \\
      \tikzfig{circuit/serialComp}
      \;\mapsto\;%
      \tikzfig{circuit/serialCompZX}
      \qquad\qquad
      \tikzfig{circuit/parallelComp}
      \;\mapsto\;%
      \tikzfig{circuit/parallelCompZX}
    \end{gather*}
  \end{spreadlines}
  \caption[Translation from hybrid quantum-classical circuits into grounded ZX diagrams.]{%
    Translation from hybrid quantum-classical circuits into \zxGND diagrams.
  }% We cannot use tikz symbols in the caption because it is a moving argument
  \label{table:circ-zx-translation}
\end{table}

\begin{figure}[th]
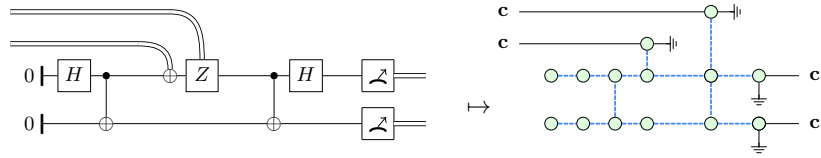

  \[
    \scalebox{0.85}{\tikzfig{examples/superdenseCoding-circuit}}
    %\quad\mapsto\quad
    %\tikzfig{examples/superdenseCoding-zx-full}
    \quad\mapsto
    \scalebox{0.85}{\tikzfig{examples/superdenseCoding-zx}}
  \]
  \caption[% Alt-caption without tikz elements
    Example translation of the superdense coding circuit into a
    zx-ground diagram with labelled inputs and outputs, and subsequent
    application of the spider-fusion rule.
  ]{%
    Example translation of the superdense coding circuit into a
    \zxGND diagram with labelled inputs and outputs, and subsequent
    application of the spider-fusion rule.
  }%
  \label{fig:example-into-zx}
\end{figure}

\begin{lemma}%
  \label{lem:circ-to-graphLike}
  The \zxGND diagram resulting from the translation $\toZxGND{\cdot}$ is weakly
  graph-like.
\end{lemma}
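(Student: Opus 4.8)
The plan is to verify directly that the \zxGND\ diagram produced by $\toZxGND{\cdot}$ meets each of the seven conditions of Definition~\ref{defn:graph-like} in its weak form, splitting the work into a finite inspection of the per-gate gadgets of Table~\ref{table:circ-zx-translation} and an analysis of the accompanying normalisation by rules \SpiderRule, \GndDoubleRule\ and \ParHRule.

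First I would establish, by inspecting the finitely many entries of Table~\ref{table:circ-zx-translation}, that every gate is sent to a small gadget with the following features: all of its spiders are Z-spiders; it carries a Z-spider at each boundary port; every \ground-spider is interior, phaseless, and attached to the rest of the gadget only through Hadamard edges; and it has no self-loop, no parallel Hadamard edges, and no two adjacent \ground-spiders, its only non-Hadamard edges being the half-edges at its ports. It then follows that the raw diagram $D_0$ obtained by wiring all gadgets together with plain (``regular'') wires already satisfies conditions~1, 4, 5 and~6, together with conditions~3 and~7 (distinct wires join distinct ports, and no port is glued to more than one other port); it violates condition~2 only through the plain wires that link the boundary spiders of consecutive gadgets.

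Then I would track the normalisation. Applying \SpiderRule\ exhaustively fuses away every plain wire joining two spiders, which gives condition~2; the process terminates because each fusion strictly decreases the number of such plain wires, a quantity left unchanged by \ParHRule\ and \GndDoubleRule, both of which also terminate since they strictly decrease the number of Hadamard edges, respectively of \ground\ generators. Fusion preserves condition~1 (two Z-spiders merge into a Z-spider); it may produce a pair of parallel Hadamard edges --- for instance from two consecutive \CNOT\ gates --- which \ParHRule\ removes, restoring condition~3, but it creates no self-loop, because the two spiders it merges come from distinct gadgets and hence were never joined by a Hadamard edge. Conditions~4 and~6 are preserved because a \ground-spider carries no plain wire and so is never fused: it remains phaseless, and no two \ground-spiders ever become adjacent. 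Conditions~5 and~7 are preserved because fusion never removes the edge between a spider and a port or a \ground, inputs are never glued to inputs, and \GndDoubleRule\ collapses any two \ground\ generators that come to sit on a common spider.

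The step I expect to be the real work is the \emph{insulation} claim of the first stage: one must confirm that the translations in Table~\ref{table:circ-zx-translation} of measurement, qubit preparation, classical fan-out, and the classically controlled $\Rx[\pi]$ and $\Z[\pi]$ gates really do keep their \ground-spiders interior and behind Hadamard edges, so that composing, say, a $\Z[\alpha]$ gate with a measurement cannot deposit the phase $\alpha$ onto a grounded spider --- a situation that would otherwise require rule $\GndDiscardRule$, which is not part of the normalisation. This is exactly what makes the statement sharper than Proposition~\ref{lem:eq-to-weak-graphLike}, which merely exhibits \emph{some} equivalent weakly graph-like diagram rather than controlling the shape of $\toZxGND{\cdot}$ itself; the superdense-coding translation of Figure~\ref{fig:example-into-zx} is a representative small instance of the argument.
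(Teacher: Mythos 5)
Your overall strategy---inspect the finitely many gadgets of Table~\ref{table:circ-zx-translation}, then show that the normalisation by \SpiderRule, \GndDoubleRule\ and \ParHRule\ restores the weakly graph-like conditions---matches the paper's, which organises the same content as a structural induction on the circuit (base cases: each gadget is trivially weakly graph-like; inductive step: serial composition joins two Z-spiders by a plain wire, fusion merges them, and the two cleanup rules repair what fusion breaks). Your global two-stage presentation and your explicit termination argument are fine and arguably more careful than the paper on those points.

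The genuine problem is your \emph{insulation} claim: that every \ground-spider of every gadget is interior and attached to the rest of its gadget only through Hadamard edges, so that \ground-spiders never participate in fusion. This contradicts the paper's own proof, which states that ``the fusion step may create spiders connected to two ground generators, one of which is removed by the application of rule \GndDoubleRule'' --- that is, \ground-spiders do sit on gadget boundary ports and do get fused. This is precisely why the \emph{weak} form of Definition~\ref{defn:graph-like} permits a spider to carry an input, a \ground\ and several outputs simultaneously, and why \GndDoubleRule\ is part of the normalisation at all. Your own text betrays the tension: you assert that a \ground-spider ``carries no plain wire and so is never fused'', yet you invoke \GndDoubleRule\ to collapse ``two \ground\ generators that come to sit on a common spider'', a situation your premise rules out. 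Consequently your argument for conditions~4, 6 and~7 rests on a false premise, and the step you defer to inspection --- confirming that a $\Z[\alpha]$ composed with a measurement cannot deposit its phase on a grounded spider --- cannot be discharged by confirming insulation, because insulation fails for the actual table. What has to be checked instead (and what the paper itself leaves largely implicit) is the weaker invariant that a \ground-spider exposed at a boundary port only ever fuses with phaseless spiders, possibly themselves grounded, after which \GndDoubleRule\ merges the duplicated grounds; your proof would need to be rerouted through that weaker invariant.
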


\begin{proof}%
  By induction on the circuit construction. Cf.
  \aplasorlong{long version on arXiv}{Appendix~\ref{sec:appendix-proofs}}.
\end{proof}

After the translation, we can apply Lemma~\ref{lem:weak-to-strict-graphlike} to
obtain a strictly graph-like diagram.
This step essentially separates the \ground generators from the inputs and
outputs, allowing the optimization procedure to move them around and let them
interact with other parts of the diagram.

\begin{lemma}%
  \label{lem:weakGL-gFlow}
  If $C$ is a hybrid quantum-classical circuit and $D$ is the graph-like
  \zxGND diagram obtained from the translation $\toZxGND{C}$ and
  Lemma~\ref{lem:weak-to-strict-graphlike}, then $G(D)$ admits
  a focused gFlow.
\end{lemma}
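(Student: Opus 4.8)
The plan is to argue by structural induction on how the hybrid circuit $C$ is assembled from the generators of Table~\ref{table:circ-zx-translation} by serial and parallel composition, carrying a gFlow along the translation. Since $\toZxGND{\cdot}$ produces a weakly graph-like diagram and the strict diagram $D$ is obtained from it only afterwards through Lemma~\ref{lem:weak-to-strict-graphlike}, I would split the proof in two: \textbf{(i)} the underlying open graph of the weakly graph-like diagram $\toZxGND{C}$ --- whose sources (Definition~\ref{defn:open-graph}) are the input-spiders and whose sinks are the output-spiders together with the \ground-spiders --- admits a gFlow; and \textbf{(ii)} inserting identity spiders as in Lemma~\ref{lem:weak-to-strict-graphlike} preserves this. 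Finally I would invoke the result of Mhalla et al.~\cite{mhallaWhichGraphStates2014} that any open graph admitting a gFlow also admits a \emph{focused} one, which upgrades the conclusion to the statement as phrased.

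For part \textbf{(i)}, the base cases are the individual generators: after the $\SpiderRule$/$\GndDoubleRule$/$\ParHRule$ clean-up, each translates to a fixed small weakly graph-like gadget, and a gFlow for each is read off by inspection, with $\prec$ orienting the gadget from its input-spiders toward its output-spiders and $g$ following the Hadamard-edge structure; every \ground-spider is phaseless and a sink, so it receives no $g$-value and contributes no constraint. Parallel composition $C_1 \parallel C_2$ is immediate, as the open graph is the disjoint union of those of $\toZxGND{C_1}$ and $\toZxGND{C_2}$, the odd neighbourhood does not mix the two components, and one simply takes the disjoint union of the two gFlows. The substantial step is serial composition $C_2 \circ C_1$: gluing the outputs of $\toZxGND{C_1}$ to the inputs of $\toZxGND{C_2}$ and fusing the matched boundary spiders turns each such spider from a sink of $G(\toZxGND{C_1})$ and a source of $G(\toZxGND{C_2})$ into an interior vertex, so $g$ must be (re)defined on it. I would take $\prec$ to be the transitive closure of $\prec_1 \cup \prec_2$ augmented by $x \prec y$ for every interior vertex $x$ of $\toZxGND{C_1}$ and every vertex $y$ of $\toZxGND{C_2}$ --- a partial order since the two vertex sets meet only on the boundary spiders --- and define $g$ on a boundary spider $v$ from the correction data the two sub-gFlows assign to $v$ on either side; the key observation making this work is that each boundary spider is $\prec$-later than all of $\toZxGND{C_1}$, so it is permitted to occur in the odd neighbourhood of $\toZxGND{C_1}$'s correction sets. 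Part \textbf{(ii)} is routine: Lemma~\ref{lem:weak-to-strict-graphlike} only appends short pendant paths of fresh spiders at boundary spiders, and a gFlow extends along such pendants in the obvious way --- the new endpoint becomes the sink, the former boundary spider becomes interior with correction set pointing into the pendant, and all other spiders keep their correction sets and odd neighbourhoods.

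The main obstacle is the serial-composition bookkeeping in part \textbf{(i)}: one must check that the augmented order remains acyclic even after the graph-like clean-up has fused spiders across the seam, that $g$ still takes values in $2^{\overline S}$ (here the sources of $C$ are exactly those of $C_1$, so $\overline S$ only grows, which helps), and that the gFlow conditions hold again for every spider whose status or neighbourhood has changed. If instead of focusing at the very end one wants to construct the focused gFlow directly, there is an extra difficulty: a boundary spider that was a sink of $\toZxGND{C_1}$ is ignored by $\toZxGND{C_1}$'s focusing condition but, once interior, may appear in an odd neighbourhood and violate $\odd_G(g(u))\cap\overline T=\{u\}$, so the correction sets must be propagated forward through $\toZxGND{C_2}$ to cancel such occurrences; this is exactly the point where one uses that the translation yields circuit-shaped (causal-flow-like) diagrams in which such forward corrections are always available. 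The \ground generators play no role in these difficulties, since declaring a phaseless spider a sink only removes gFlow constraints; one could even short-cut part of the argument by invoking the pure-quantum result of Duncan et al.~\cite{DKPW2019qcircSimpl} for the $\{\CNOT,\H,\Z[\alpha],\Rx[\alpha]\}$ fragment and checking only the genuinely new generators (measurement, preparation, classically controlled $\Z[\pi]$ and $\Rx[\pi]$, $\XOR$, $\NOT$, clone, ancilla init/termination).
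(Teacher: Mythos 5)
Your proposal is correct in outline, but it takes a genuinely different route from the paper. The paper does not maintain a gFlow through the induction at all: it introduces the much lighter invariant of a \emph{causal flow} $(f,\prec)$ with $f\colon\overline{T}\to\overline{S}$ a single-successor function, shows by induction on the construction of $C$ that $G(\toZxGND{C})$ admits a causal flow (base gadgets by inspection; parallel composition by disjoint union; serial composition by redirecting $f_1$ on the sinks of $D_1$ into the corresponding fused input-spiders of $D_2$ and taking the transitive closure of the orders), shows separately that the identity-spider insertion of Lemma~\ref{lem:weak-to-strict-graphlike} extends a causal flow along the new pendant paths, and only then invokes Duncan et al.'s Theorem~B.3 (causal flow $\Rightarrow$ focused gFlow) to conclude. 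You instead carry a full gFlow $g\colon\overline{T}\to 2^{\overline{S}}$ through the induction and focus at the very end via Mhalla et al. This works --- your observations that $\overline{S}$ only grows under composition, that correction sets of $D_2$ avoid the seam so odd neighbourhoods computed in $G(D_2)$ survive the fusion, and that putting all of $D_1$'s interior before all of $D_2$ discharges the order conditions, are exactly the checks needed --- but it forces you into the odd-neighbourhood and acyclicity bookkeeping that you yourself identify as the main obstacle, none of which arises for a causal flow, where composition is just concatenation of successor functions. What your route buys is generality (it would survive a translation whose gadgets admit a gFlow but no causal flow, e.g.\ after some prior simplification); what the paper's route buys is that the entire serial-composition case collapses to a few lines, at the price of importing the causal-to-focused conversion. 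Your closing remark that one could exploit the ``circuit-shaped (causal-flow-like)'' structure of the translated diagrams is precisely the simplification the paper commits to from the start.
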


\begin{proof}%
  By induction on $C$. Cf.
  \aplasorlong{long version on arXiv}{Appendix~\ref{sec:appendix-proofs}}.
\end{proof}

\section{Grounded ZX optimization}%
\label{sec:optimization}

Our simplification strategy for \zxGND\ diagrams is based on eliminating nodes
from the diagram by systematically applying a number of rewriting rules while
preserving the existence of a focused gFlow. In this section we introduce the
new rules, define a strategy to maximize their effectiveness, and finally use
it together with the pure-ZX optimization to define our algorithm.

\subsection{Basic simplification rules}

Duncan et al.~\cite{DKPW2019qcircSimpl} presented the following gFlow-preserving
\textit{local complementation} and  \textit{pivoting} rules for the ZX calculus
in their optimization procedure. These rules effectively reduce the size of the
diagram by at least one node on each application by eliminating internal
proper-Clifford spiders and Pauli spider pairs respectively.
\vspace*{-1em}
\[
  \hfill \scalebox{0.82}{\tikzfig{simpl/lc-simp}} \hfill
\]
\[
  \scalebox{0.82}{\tikzfig{simpl/pivot-a}}
  \quad\overset{\PivotRule}{=}\quad
  \scalebox{0.82}{\tikzfig{simpl/pivot-b}}
\]

These rules can be applied directly in \zxGND\ diagrams when the target spiders
are not connected to a \ground generator. For the cases where some of the target
spiders are \ground-spiders, we introduce the following altered rules.
Their derivation can be found in
\aplasorlong{the long version on arXiv}{Appendix~\ref{sec:derivations}}.
\[
  \scalebox{0.75}{\tikzfig{simpl/gnd-rewriting/gndComplement-a}}
  \!\!\!\overset{\LocalCompGndRule}{=}
  \scalebox{0.75}{\tikzfig{simpl/gnd-rewriting/gndComplement-b}}
  \quad
  \scalebox{0.75}{\tikzfig{simpl/gnd-rewriting/gndPivot-a}}
  \overset{\PivotGndRule}{=}
  \scalebox{0.75}{\tikzfig{simpl/gnd-rewriting/gndPivot-b}}
\]
\[
  \scalebox{0.82}{\tikzfig{simpl/gnd-rewriting/gndDeletion-a}}
  \quad\overset{\PauliPivotGndRule}{=}\quad
  \scalebox{0.82}{\tikzfig{simpl/gnd-rewriting/gndDeletion-b}}
\]
Notice that both rules $\LocalCompGndRule$ and $\PivotGndRule$ do not decrease
the number of spiders in the diagram. As such, we will focus on rule
$\PauliPivotGndRule$ for our optimization.

If \PauliPivotGndRule\ is applied with a non-\ground spider connected to a
boundary, the rule produces a \ground-spider connected to an input or output
thus needing to add an identity operation as described in
Lemma~\ref{lem:weak-to-strict-graphlike} to preserve the graph-like property.
Since in this case we add additional nodes to the graph, we will only apply rule
\PauliPivotGndRule on a boundary spider if it can be followed by another
node-removing rule.

Additionally, we will use rules $\GndHDiscardRule$ and $\GndRemoveRule$ directly
to remove nodes in the diagram when there are \ground-spiders with degree 1 or
0\ in the graph, respectively.

\begin{lemma}
  If the non-\ground spider in the lhs of the discarding rule~\GndHDiscardRule\
  is not connected to an output or input, then applying the rule over a
  graph-like diagram $D$ preserves the existence of a focused gFlow.
\end{lemma}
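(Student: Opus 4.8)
The plan is to determine how rule $\GndHDiscardRule$ acts on the underlying open graph and then transport the focused gFlow of $D$ across this change. Write $w$ for the degree-$1$ \ground-spider that the rule removes and $v$ for its unique neighbour, joined to $w$ by a Hadamard edge; by the form of the rule's left-hand side $v$ is not a \ground-spider, and by hypothesis it touches no input or output, so in the open graph $G(D) = (G, S, T)$ we have $v \in \overline{S} \cap \overline{T}$, while $w$ is a sink with $N_G(w) = \{v\}$. Recalling that $\GndHDiscardRule$ amounts to $\GndHadamardRule$ followed by absorbing the resulting discard into $v$ (which also erases $v$'s phase), the rewritten diagram $D'$ has underlying open graph $G' = G(D')$ obtained from $G(D)$ by deleting the vertex $w$ together with its one edge $\{v,w\}$, keeping $S' = S$, and setting $T' = (T \setminus \{w\}) \cup \{v\}$: the pendant \ground-spider disappears and $v$ is promoted to a sink. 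Hence $\overline{S'} = \overline{S} \setminus \{w\}$ and $\overline{T'} = \overline{T} \setminus \{v\}$.

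Given a focused gFlow $(g,\prec)$ on $G(D)$, I would define a candidate $(g', \prec')$ on $G'$ by $g'(u) := g(u) \setminus \{w\}$ for $u \in \overline{T'} \subseteq \overline{T}$, with $\prec'$ the restriction of $\prec$ to $V(G')$. The codomain and order conditions are immediate: $g'(u) \subseteq g(u) \subseteq \overline{S}$ and $w \notin g'(u)$, so $g'(u) \subseteq \overline{S'}$; and for $x \in g'(u) \subseteq g(u)$ we have $u \prec x$, hence $u \prec' x$ since $u, x \in V(G')$. For the odd-neighbourhood condition the key observation is that erasing the pendant edge $\{v,w\}$ does not alter any odd neighbourhood relative to sets avoiding $w$: the only vertex whose graph-neighbourhood changes is $v$ (losing $w$), and since $w \notin A$ this leaves $|N(v) \cap A|$ untouched, so $\odd_{G'}(A) = \odd_G(A) \setminus \{w\}$ for every $A \subseteq V(G')$. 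Combining this with $\mathbb{F}_2$-linearity of $\odd_G$ and $\odd_G(\{w\}) = N_G(w) = \{v\}$ gives $\odd_{G'}(g'(u)) = (\odd_G(g(u)) \,\triangle\, \{v\}) \setminus \{w\}$ when $w \in g(u)$ and $\odd_{G'}(g'(u)) = \odd_G(g(u)) \setminus \{w\}$ otherwise; in both cases this set differs from $\odd_G(g(u))$ only in whether it contains $v$ and whether it contains $w$. Intersecting with $\overline{T'} = \overline{T} \setminus \{v\}$ kills the possible $v$-discrepancy and is insensitive to removing $w \in T$, so $\odd_{G'}(g'(u)) \cap \overline{T'} = \odd_G(g(u)) \cap (\overline{T} \setminus \{v\}) = \{u\} \setminus \{v\} = \{u\}$, using the gFlow equation for $G(D)$ and $u \neq v$. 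Hence $(g', \prec')$ is a focused gFlow on $G(D')$.

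The one delicate point — and exactly why the boundary hypothesis is needed — is the bookkeeping around $v$: the construction works precisely because $v$ is internal in $G(D)$, so that promoting it to a sink removes it from $\overline{T}$ in just the way that neutralises the correction term $\{v\}$ arising when $w \in g(u)$. If $v$ were attached to an input or output, $\GndHDiscardRule$ would create a \ground-spider sharing a boundary, and one would first have to re-establish graph-likeness by inserting an identity spider (Lemma~\ref{lem:weak-to-strict-graphlike}), needing a separate flow argument. A harmless side issue that does not affect the gFlow is that $v$ may end up adjacent to some pre-existing \ground-spider, momentarily breaking clause~4 of graph-likeness; this is repaired by a further application of $\GndCNOTRule$, which deletes an edge between two sinks and hence changes the odd-neighbourhood status only of vertices outside $\overline{T'}$, leaving every $\overline{T'}$-restricted odd neighbourhood — and therefore the gFlow — intact.
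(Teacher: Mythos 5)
Your proof is correct and rests on the same idea as the paper's: the paper dispatches this lemma in two sentences by observing that \ground-spiders are sinks of the underlying open graph, which is exactly the fact your explicit construction of $(g',\prec')$ exploits (removing the pendant sink $w$ and promoting $v$ to a sink leaves every $\overline{T'}$-restricted odd neighbourhood equal to $\{u\}$). You have simply carried out in full the verification the paper leaves implicit, including the odd-neighbourhood bookkeeping for the case $w \in g(u)$ and the $\GndCNOTRule$ repair of clause~4, neither of which the paper spells out.
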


\begin{proof}
  If the non-\ground spider in the lhs is not connected to an input or output
  of the diagram, then applying the rule does not break the graph-like
  property of $D$. The preservation of the gFlow follows from \ground-spiders
  being sinks of the underlying open graph.
\end{proof}

\begin{lemma}
  Rules~\LocalCompGndRule, \PivotGndRule, and~\PauliPivotGndRule\ preserve the
  existence of a focused gFlow.
\end{lemma}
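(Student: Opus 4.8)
The plan is to reduce the claim, rule by rule, to facts already established for pure ZX diagrams, the only genuinely new ingredient being that the $\ground$-spider appearing in each rule is a \emph{sink} of the underlying open graph --- i.e.\ an element of $T$, hence outside $\overline{T}=\mathrm{dom}(g)$ --- exactly as already exploited in the lemma for $\GndHDiscardRule$. I would begin by recording the effect of each rule on the open graph $G(D)=(G,S,T)$: all three leave $S$ (the input-spiders) untouched; $\LocalCompGndRule$ and $\PivotGndRule$ remove no vertices and leave $T$ unchanged, modifying $G$ by, respectively, a local complementation and a pivot along an edge incident to the $\ground$-spider (together with the usual phase adjustments on non-$\ground$ spiders, which are immaterial for the open graph); and $\PauliPivotGndRule$ deletes a Pauli spider together with the $\ground$-spider attached to it, so $T$ loses exactly that $\ground$-spider. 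Thus in every case the sets $\overline{S}$ and $\overline{T}$ that the gFlow data refers to change only in a fully controlled way.

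For $\LocalCompGndRule$ and $\PivotGndRule$ I would then invoke the standard result --- used by Duncan et al.~\cite{DKPW2019qcircSimpl} and rooted in the gFlow literature~\cite{mhallaWhichGraphStates2014} --- that a local complementation of an open graph preserves the existence of a focused gFlow, via an explicit reconstruction of $g$ and of the order $\prec$, and that the same holds for a pivot, which is a composition of three local complementations. The only point needing a remark is that here the complementation is centred at a vertex of $T$; but this is precisely the case of complementing about an output-like vertex, routinely handled in circuit-extraction arguments, and the reconstruction of $g$ never references a value $g(\cdot)$ at that vertex, so it applies verbatim.

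For $\PauliPivotGndRule$ --- the substantive case, being the only one of the three that removes spiders --- I would observe that the $\ground$-spider has phase $0$ and is therefore itself a Pauli spider, so the rule is an instance of a Pauli-pivot vertex removal in which one of the removed vertices is a sink. I would then mirror Duncan et al.'s reconstruction of a focused gFlow under removal of a Pauli-pivot pair: for every $u$ with a removed vertex in $g(u)$, reroute $g(u)$ through the neighbourhoods of the two removed vertices exactly as in the pure case; restrict $\prec$ to the surviving vertices; and verify the two focused-gFlow conditions $\odd_G(g'(u))\cap\overline{T}=\{u\}$ and $u\prec w$ for all $w\in g'(u)$. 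The presence of the sink among the removed vertices only simplifies matters --- it carries no $g$-value and imposes no order constraint. By the restriction stated just before the lemma we apply $\PauliPivotGndRule$ only when the result stays graph-like (the removed non-$\ground$ spider is interior, or the move is immediately composed with another node-removing rule in the sense of Lemma~\ref{lem:weak-to-strict-graphlike}), so no boundary identities need to be introduced in the middle of the argument.

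The main obstacle is precisely this last, node-removing case: making the rerouting of the gFlow function explicit and checking the odd-neighbourhood and order conditions after the deletions, while tracking exactly how $T$ shrinks. The local-complementation and pivot cases, by contrast, should follow essentially by citation once the observation ``$\ground$-spider $=$ sink $=$ output-like vertex'' is in place.
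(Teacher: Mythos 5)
Your proof is correct in substance, but it takes a noticeably different route from the paper. The paper's own proof is a one-line compositional argument: each of \LocalCompGndRule, \PivotGndRule\ and \PauliPivotGndRule\ is \emph{derived} (see the appendix of rule derivations) as an application of \GndDiscardRule\ followed by the pure rule \LocalCompRule\ or \PivotRule, and each of those constituent rewrites is already known to preserve the existence of a focused gFlow --- the pure ones by Duncan et al., the discard step because it only touches a \ground-spider, which is a sink of the open graph. You instead work directly at the level of the open graph: you track how $S$, $T$ and the edge set change under each rule and re-run the local-complementation, pivot, and Pauli-pair-removal gFlow reconstructions, with the observation that the \ground-spider lies in $T$ and so carries no $g$-value and no order constraint. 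Both arguments rest on the same two facts (LC/pivot preserve focused gFlow; \ground-spiders are sinks), so your proof is not wrong --- it is essentially an unfolding of the citations the paper's proof compresses into one sentence. What the paper's version buys is brevity and reuse of the derivations it has already written down; what yours buys is self-containedness and an explicit verification of the odd-neighbourhood and order conditions, which the paper never spells out. One small caution: your description of \PauliPivotGndRule\ as deleting the Pauli spider \emph{together with} the \ground-spider should be checked against the actual rule --- the derivation realises it as a pivot about the pair consisting of the \ground-spider (phase $0$, hence Pauli) and its Pauli neighbour, and the bookkeeping of which vertices survive and where the \ground\ ends up matters for how $T$ shrinks; this does not affect the validity of the pivot-removal reconstruction you invoke, but it is the one place where your open-graph accounting needs to match the figure exactly.
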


\begin{proof}
  Notice that that rules~\LocalCompGndRule, \PivotGndRule, 
  and~\PauliPivotGndRule\ are compositions of gFlow-preserving rules.
\end{proof}

\subsection{Ground-cut simplification}%
\label{sec:ground-simplification}

The previously introduced rewriting rules require a simplification strategy to
apply them. A simple solution is to try to find a match for each rule
and apply them iteratively until no more matches are available. We describe a
strategy that can find additional rule matches by operating on the biadjacency
matrix between the \ground-spiders and the non-\ground spiders.
\begin{definition}%
  \label{defn:ground-cut}
  The \textit{ground-cut} of a graph-like \zxGND diagram $D$
  is the cut resulting from splitting the \ground and non-\ground spiders in $G(D)$.
\end{definition}

Since the diagram is graph-like, there are no internal wires in the \ground
partition.
Given a \zxGND diagram $D$, we denote $M_D$ the biadjacency matrix of its
ground-cut, where rows correspond to \ground-spiders and columns correspond to
non-\ground spiders.
We can apply all elementary row operations on the matrix by rewriting the
diagram. The addition operation between the rows corresponding to the
\ground-spider $u$ and the \ground-spider $v$ can be implemented via the
following rule\aplasorlong{}{, the derivation of which can be found in
Appendix~\ref{sec:derivations}}.
\[
  \scalebox{0.85}{\tikzfig{simpl/gnd-row-sum-0}}
  \quad\overset{\RowSumGndRule}{=}\quad
  \scalebox{0.85}{\tikzfig{simpl/gnd-row-sum-4}}
\]

Using the elementary row operations we can apply Gaussian elimination on the
ground-cut biadjacency matrix of a graph-like \zxGND\ diagram, generating in the
process an equivalent diagram whose ground-cut biadjacency matrix is in
reduced echelon form.
%The elimination process runs in $\bigo(n^2*m)$ steps, where $n$ is the number
%of ground generators and $m$ is the number of spiders adjacent to a
%\ground-spider~\citationneeded.

Any row in the ground-cut biadjacency matrix left without non-zero elements
after applying Gaussian elimination corresponds to an isolated
\ground-spiders in the diagram that can be eliminated by rule \GndRemoveRule.
If the reduced row echelon form of the biadjacency matrix contains a row with
exactly one non-zero elements, then that element corresponds to an isolated
\ground-spider and non-\ground spider pair in the diagram and therefore we can
apply rule \GndHDiscardRule\ to remove the non-\ground spider.
%
%After the reduction, the biadjacency matrix of the ground-cut of the resulting
%diagram can again be reduced to row echelon form in $\bigo(n*m)$ steps where $n$
%is the number of ground generators and $m$ is the number of nodes adjacent to a
%\ground-spider.

\subsection{The Algorithm}%
\label{sec:algorithm}

Based on the previous strategy, we define a terminating procedure which turns
any graph-like \zxGND diagram into an equivalent \textit{simplified} diagram
that cannot be further reduced.

\begin{definition}
  A graph-like \zxGND diagram is in simplified-form if it does not contain any
  of the following, up to single-qubit unitaries on the inputs and outputs.
  \begin{itemize}
    \item Interior proper Clifford spiders.
    \item Adjacent pairs of interior Pauli spiders.
    \item Interior Pauli spiders adjacent to boundary spiders.
    \item Interior Pauli spiders adjacent to \ground-spiders.
    \item Degree-1 \ground-spiders not connected to input or output spiders.
    \item Connected components not containing inputs nor outputs.
  \end{itemize}
\end{definition}
\ifaplas
\else
  An example of diagrams satisfying and not satisfying this property is shown
  in Figure~\ref{fig:simplified-example}

  \begin{figure}[tb]
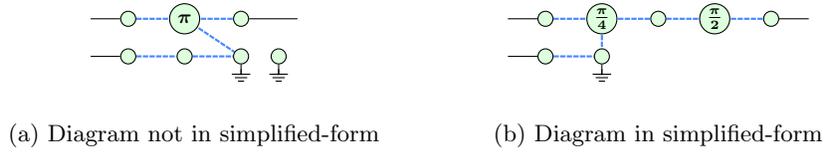
%
    \begin{subfigure}[b]{0.5\textwidth}
         \centering
         \[\tikzfig{examples/simplified-not}\]
         \caption{Diagram not in simplified-form}
     \end{subfigure}
     %\hfill
     \begin{subfigure}[b]{0.5\textwidth}
         \centering
         \[\tikzfig{examples/simplified}\]
         \caption{Diagram in simplified-form}
     \end{subfigure}
    \caption{Example of simplified and non-simplified diagrams.}%
    \label{fig:simplified-example}%
  \end{figure}
\fi

We define an optimization algorithm that produces diagrams in simplified-form by
piggybacking on the pure optimization procedure. This optimization applies the
local complementations \LocalCompRule\ and pivoting \PivotRule\ rules until
there are no interior proper Clifford spiders or adjacent pairs of non-\ground
interior Pauli spiders.
After the initial pure simplification, we continue our optimization as follows.
\begin{enumerate}
  \item \label{algorithm:ground-loop}
    Repeat until no rule matches, removing wires between \ground-spiders and
    parallel Hadamard connections after each step:
    \begin{enumerate}
      \item Run Gaussian elimination on the ground-cut of the diagram
        as described in Section~\ref{sec:ground-simplification}.
      \item \label{algorithm:reduction-step-disconnected}
        Remove the grounds corresponding to null rows with rule \GndRemoveRule.
      \item \label{algorithm:reduction-step-discard}
        If any row of the biadjacency matrix has a single non-zero element,
        corresponding to a \ground-spider connected to a spider $v$, then:
        \begin{enumerate}
          \item If $v$ is not a boundary spider, apply rule~\GndHDiscardRule.
          \item If $v$ is a boundary spider and $v$ is adjacent to a Pauli spider,
            apply rule~\GndHDiscardRule\ immediately followed by
            the procedure from Lemma~\ref{lem:weak-to-strict-graphlike}
            to make the diagram graph-like again.
            Then delete the Pauli neighbour using rule~\PauliPivotGndRule,
            to ensure that the step removes at least one node.
        \end{enumerate}
      \item Apply Pauli spider elimination rule~\PauliPivotGndRule\ 
        until there are no Pauli spiders connected to ground spiders.
    \end{enumerate}
  \item Remove any connected component of the graph without inputs or outputs.
\end{enumerate}

Notice that each cycle the loop reduces the number of nodes in the graph, so
this is a terminating procedure. Additionally, since each applied rule preserves
the existence of a gFlow the final diagram admits a gFlow. An example run of
the algorithm is shown in Figure~\ref{fig:optimization}.

\begin{figure}[tb]
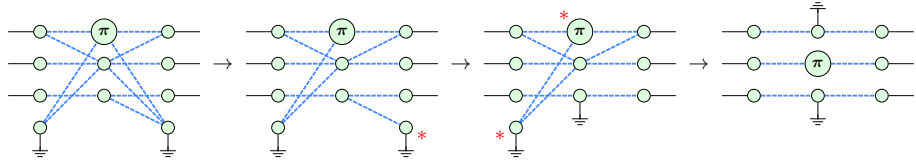
%
  \noindent
  \centering
  \scalebox{0.85}{%
    \(
      \tikzfig{examples/optimizationB-0}
      \;\to\;
      \tikzfig{examples/optimizationB-1}
      \;\to\;
      \tikzfig{examples/optimizationB-2}
      \;\to\;
      \tikzfig{examples/optimizationB-3}
    \)
  }%
  \caption{Example of a diagram optimization applying a ground-cut
  simplification, a discard rule, and a Pauli elimination.}%
  \label{fig:optimization}%
\end{figure}

\section{Circuit extraction}%
\label{sec:extraction}

Here we describe a general circuit extraction procedure for graph-like
\zxGND diagrams admitting a focused gFlow into hybrid quantum-classical
circuits, by modifying the procedure for pure diagrams from the Clifford
optimization. We present the pseudocode in
Algorithm~\ref{algo:extraction}\aplasorlong{}{ and an example of an execution in
Appendix~\ref{sec:extraction-example}}.

The algorithm progresses through the diagram from right-to-left, maintaining a
set of spiders $\frontier$, called the \textit{frontier}, which represents the
unextracted spiders connected to the extracted segment. Each frontier spider is
assigned an output qubit line $\fQubit(v)$. This set is initially populated by
the nodes connected to outputs of the diagram. The strategy is to proceed
backwards by steps, adding unextracted spiders into the frontier and deleting
some of them to extract operations on the output circuit, in back-to-front
order.

To find candidate spiders to add to the frontier we apply Gaussian elimination
on the biadjacency matrix of the frontier and non-frontier spiders, similarly to
the optimization method described in Section~\ref{sec:ground-simplification}.
The gFlow property of the graph ensures that we can always progress by
extracting a node.  It suffices to look at the set of non-frontier vertices
maximal in the order and notice that, after the Gauss elimination, either we can
choose a \ground-spider from the set, or a non-\ground spider that has a single
connection to the frontier. A careful implementation of the biadjacency matrix
row and column ordering can reduce the number of \ground-spider extractions when
no non-\ground candidates are available. We require the following proposition
to apply the row additions on the graph
(Duncan et al.~\cite{DKPW2019qcircSimpl}, Proposition 7.1).

\begin{proposition}%
  \label{lem:row-sum-extraction}
  For any \zxGND diagram $D$, the following equation holds:
  \[
    \tikzfig{extraction-v2/row-sum-0}
    \quad=\quad
    \tikzfig{extraction-v2/row-sum-1}
  \]
  where $M$ describes the biadjacency matrix of the relevant vertices, and
  $M'$ is the matrix produced by adding row 2 to row 1 in $M$.
  Furthermore, if the diagram on the LHS has a focused gFlow, then so does the
  RHS.
\end{proposition}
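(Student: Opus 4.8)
The plan is to reduce both halves of the statement to the pure-ZX version proved by Duncan et al.\ (Proposition~7.1 of~\cite{DKPW2019qcircSimpl}), the point being that the rewrite is local to the ordinary ZX-fragment and that the gFlow argument only uses the \emph{sink} status of the two boundary spiders, so the presence of ground generators is immaterial.

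For the equation, I would first observe that the pictured rewrite involves only Z-spiders, Hadamard edges and the boundary gadget that realises the row operation (a $\CNOT$ on the two relevant boundary wires); no \ground\ generator of $D$ is ever touched. Since every rewrite rule of the ZX-calculus is also a rule of \zxGND, it suffices to re-derive the identity inside the pure ZX-fragment, which I would do exactly as Duncan et al.\ do: expand the boundary gadget into a Z-/X-spider pair, copy it through the two boundary Z-spiders using \CopyRule\ and strong complementarity \BialgRule, re-fuse with \SpiderRule, and clear the resulting parallel Hadamard wires with \ParHRule. The net effect on connectivity is precisely to replace one row of $M$ by its bitwise sum with the other, i.e.\ $M \mapsto M'$, while the rest of $D$ (ground generators included) is carried along unchanged.

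For gFlow preservation, write $G = G(D)$ with sources $S$ and sinks $T$ and let $w_1, w_2$ be the boundary spiders whose rows are added; both are connected to the current outputs, hence $w_1, w_2 \in T$. The rewrite yields $G'$ with $N_{G'}(w_1) = N_G(w_1) \triangle N_G(w_2)$ and all other adjacencies unchanged. Given a focused gFlow $(g,\prec)$ of $G$ --- and, without loss of generality, $\prec$ the minimal such order, so that every sink is $\prec$-maximal --- I would set $g'(u) = g(u)$ when $w_1 \notin g(u)$ and $g'(u) = g(u) \triangle \{w_2\}$ when $w_1 \in g(u)$. The key computation is that for $v \in \overline{T}$ (so $v \neq w_1, w_2$) one has $v \in \odd_{G'}(A) \iff v \in \odd_G(A) \oplus \bigl([\,v \sim_G w_2\,] \wedge [\,w_1 \in A\,]\bigr)$, whence $\odd_{G'}(A) \cap \overline{T}$ equals $\odd_G(A) \cap \overline{T}$ if $w_1 \notin A$ and equals $\bigl(\odd_G(A) \triangle N_G(w_2)\bigr) \cap \overline{T}$ if $w_1 \in A$; substituting $A = g'(u)$ and using $\odd_G(g(u)) \cap \overline{T} = \{u\}$ together with $\odd_G(g(u) \triangle \{w_2\}) = \odd_G(g(u)) \triangle N_G(w_2)$ makes both cases collapse back to $\{u\}$. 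For the order condition the only fresh relations required are $u \prec w_2$ at the indices $u$ where $w_2$ was newly inserted into $g'(u)$, and since $w_2$ is a $\prec$-maximal sink this extension introduces no cycle; so $(g',\prec')$ is a focused gFlow of $G'$. If instead one of $w_1, w_2$ happened to be a \ground-spider the argument would be verbatim the same, ground-spiders being sinks as well.

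The part I expect to be fiddly is this last paragraph --- pinning down the correct update $g \mapsto g'$ and confirming that the partial order can be repaired --- whereas the equational half is essentially a transcription of the pure-ZX proof, valid here because \zxGND\ contains the ZX-calculus and the rewrite stays clear of the ground fragment.
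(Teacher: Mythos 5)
Your argument is correct. The paper does not prove this proposition itself---it imports it verbatim as Proposition~7.1 of Duncan et al.---and your equational half is precisely a reconstruction of that proof (push the CNOT gadget through the boundary spiders via copy/bialgebra, refuse, cancel parallel Hadamard wires), while your gFlow update $g'(u) = g(u) \triangle \{w_2\}$ when $w_1 \in g(u)$, combined with linearity of $\odd$ over symmetric difference and the maximality of sinks in the minimal order, is the standard preservation argument. The one point the paper leaves implicit, and which you correctly supply, is why the result transfers from pure ZX to \zxGND: the rewrite never touches a \ground\ generator, and \ground-spiders are sinks of the underlying open graph, so neither half of the argument sees them.
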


In our pseudocode, the call to \textsc{CleanFrontier} ensures that $\frontier$
only contains phaseless spiders without internal wires. Notice that it preserves
the gFlow since it only modifies edges between sink nodes, and removes spiders
with no other connections.
After the while loop terminates, all outputs of the circuit will have been
extracted. If there are inputs left unextracted, and since the diagram had a
gFlow, we can discard them directly via measurement operations.

Finally, we add any necessary swap operations to map the inputs to the
correct lines, and insert qubit initializations and measurements at
inputs and outputs marked as classical. In Section~\ref{sec:classicalization}
we detail a method to better detect the internal parts of the circuit that can
be implemented classically.

In any case, each step of the while loop in
Algorithm~\ref{algo:extraction} line~\ref{algo:extraction:while},
preserves the gFlow of the diagram, and we can show that it terminates
in at most $|V|$ steps: Indeed, if there are no non-frontier spiders,
then a call to \textsc{CleanFrontier} will remove all nodes from the
frontier. Moreover, each step of the while loop in
line~\ref{algo:extraction:while} moves one non-frontier spider to
$\frontier$. 

% \begin{lemma}%
%   \label{lem:extraction-gflow}
%   Each step of the while loop in Algorithm~\ref{algo:extraction},
%   line~\ref{algo:extraction:while}, preserves the gFlow of the diagram.
% \end{lemma}

% \begin{proof}
%   By looking at the set of non-frontier spiders maximal in the gFlow order. Cf.
%   \aplasorlong{long version on arXiv}{Appendix~\ref{sec:appendix-proofs}}.
% \end{proof}

% \begin{lemma}
%   The algorithm terminates.
% \end{lemma}

% \begin{proof}
%   Notice that if there are no non-frontier spiders, then a call to
%   \textsc{CleanFrontier} will remove all nodes from the frontier. Notice that
%   each step of the while loop in line~\ref{algo:extraction:while} moves one
%   non-frontier spider to $\frontier$. Therefore, the loop terminates in at most
%   $|V|$ steps.
% \end{proof}

\begin{algorithm}[t]
\caption{Circuit extraction}\label{algo:extraction}
\begin{algorithmic}[1]
\Function{Extraction}{}
  \State $\frontier : Set\langle Node\rangle \gets \outputs$,
    $\fQubit : Map\langle Node, int\rangle \gets \emptyset$
  \ForAll{$v \in \frontier$}
    $\fQubit(v) \gets$ Output connected to $v$
  \EndFor
  \State \textsc{CleanFrontier}$(\frontier, \fQubit)$
  \While{$\frontier \neq \emptyset$} \label{algo:extraction:while}
    \State Run Gauss elimination on the frontier biadjacency matrix $M$
    (Proposition~\ref{lem:row-sum-extraction})
    %\Statex \Comment $M$ is in row echelon form
    \If{a row of $M$ has a single non-zero element}
      \State Let $u$ and $v$ be the corresponding non-frontier and frontier node
      \State $\fQubit(u) \gets \fQubit(v)$
      \State Remove $v$ from the diagram and add $u$ to $\frontier$
    \Else
      \State $v \gets$ Arbitrary \ground-spider in the neighbourhood of $\frontier$
      \State $\fQubit(v) \gets$ New qubit line id
      \State Extract a classical bit termination on $\fQubit(v)$ and
        add $u$ to $\frontier$
    \EndIf
    \State \textsc{CleanFrontier}$(\frontier, \fQubit)$
  \EndWhile
  \ForAll{Unextracted $v \in I$}
      \State $\fQubit(v) \gets$ Input connected to $v$
      \State Extract a measurement gate
        and a classical bit termination on $\fQubit(v)$
      \State Assign the corresponding input to $\fQubit(v)$
  \EndFor
\EndFunction

\Function{CleanFrontier}{$\frontier, \fQubit$}
  \ForAll{$v \in \frontier$}
    \If{$v$ is a \ground-spider}
      Remove the \ground, extract a measurement on $\fQubit(v)$
    \EndIf
    \If{$v$ has a phase $\alpha \neq 0$}
      Set $\alpha = 0$, extract a \Z[\alpha] gate on $\fQubit(v)$
    \EndIf
    \ForAll{$u\in\frontier, v \sim u$}
      Remove the wire, extract a \CZ\ gate on $\fQubit(v)$, $\fQubit(u)$
    \EndFor
    \If{$v$ is not connected to any other node}
      \State Remove $v$
      \If{$v \in \inputs$}
        assign the input to qubit $\fQubit$
      \Else{}
        extract a $\ket{+}$ qubit initialization on $\fQubit(v)$
      \EndIf
    \EndIf
  \EndFor
\EndFunction
\end{algorithmic}
\end{algorithm}

\section{Circuit classicalization}%
\label{sec:classicalization}

The extraction procedure described in Section~\ref{sec:extraction} produces
correct circuits that are almost completely composed by quantum gates and wires,
without any classical operation.
In this section we describe the general problem of detecting parts of the
circuits that can be realized as classical operations, and introduce an
efficient heuristic solution based on a local-search.
Notice, however, that while we aim to recognize all classically realizable
operations in the circuit the characteristics of each quantum computer may
dictate the final choice between quantum and classical operators by taking into
account the costs of exchanging data between both realms.

Given a \zxGND diagram, we decorate its wires using the set of labels $\labels =
\{\Q, \X, \Y, \Z, \bot\}$. The label $\Z$ means that this particular wire can be
replaced by a classical wire (possibly precomposed with a standard basis
measurement and postcomposed by a qubit initialisation in the standard basis
depending on whether the connected wires are also classical or not), and similarly for
$\X$ and $\Y$ by adapting the basis of measurement/initialisation to the diagonal
or $Y$ basis. $\Q$ means that the wire is a quantum wire, and finally $\bot$
means that the wire can be removed by precomposing with a \ground and postcomposing
with a maximally mixed state.  The set of labels form a partial order, $\Q \ge
\X,\Y,\Z \ge \bot$.
%\[ \tikzfig{classical/label-order} \]

A labelling $L$ of a diagram $D$ is a map from its edges into a pair of labels.
The two labels, drawn at each end of the wire, indicate the origin of the
constraint. Intuitively, $\tikzfig{classical/id-za}$ means that the wire is
produced in such a way that guarantees that the qubit carries classical
information encoded in the computational basis, whereas
$\tikzfig{classical/id-az}$ means that the wire can be replaced by a classical
wire because some process will force this qubit to be in that basis ---for
instance, it is going to be measured in the standard basis and thus one can
already measure this qubit in the standard basis and use a classical wire---.
We define a partial order between labellings of a diagram as the natural lift
from the partial order of the labels.

Each label corresponds to a density matrix subspace of $\C^{2\times2}$,
representing all possible mixed states allowed by that particular kind of wire.
\[
\begin{array}{ll}
  \interpretLbl{\Q} = \C^{2\times2}
  &
  \interpretLbl{\Z} =%
    \{\alpha \ketbra{0}{0} + \beta \ketbra{0}{0} \mid%
    \alpha,\beta\in\R_{\geq0}, \alpha+\beta=1\}
  \\
  \interpretLbl{\bot} = \{\frac12 \ketbra00 + \frac12 \ketbra11\}
  \quad
  &
  \interpretLbl{\X} =%
    \{\alpha \ketbra{+}{+} + \beta \ketbra{-}{-} \mid%
    \alpha,\beta\in\R_{\geq0}, \alpha+\beta=1\}
  \\
  &
  \interpretLbl{\Y} =%
    \{\alpha \ketbra{\circlearrowright}{\circlearrowright} + \beta
      \ketbra{\circlearrowleft}{\circlearrowleft} \mid%
    \alpha,\beta\in\R_{\geq0}, \alpha+\beta=1\}
\end{array}%
\]
Notice that the greatest common ancestor $\interpretLbl{A \sqcup B}$ corresponds
to the intersection of the sets.

\begin{wrapfigure}[8]{R}{2in}
  \vspace{-3ex}{\tikzfig{classical/validity-}}
\end{wrapfigure}
Intuitively, a labelling is valid if we can cut any wire in the diagram and,
after forcing a valid state in the inputs and outputs, we get a valid state in
the cut terminals. That is, we rearrange the diagram to transform all outputs
into inputs and connect the cut terminals as outputs, as shown on the right.
Then, applying an arbitrary input
\(
  \rho \in (\bigotimes_i^n \interpretLbl{A_i})
    \otimes(\bigotimes_j^m \interpretLbl{D_j})
\)
to the diagrams produces a result in $\interpretLbl{E}\otimes\interpretLbl{F}$.

Notice that if $A$ is a valid label for a wire then any $B \geq A$ is also
valid, and in particular \Q\ is always a valid label. We can then omit
unnecessary labels in the diagrams, marking them implicitly as \Q.

Given a \zxGND diagram $D$ with marked classical inputs and outputs,
we define the classicalization problem as finding a minimal valid labeling
where the inputs and outputs are labelled as \Q\ or \Z\ accordingly.

\subsection{Local-search algorithm}

We present a local-search labelling procedure for \zxGND\ diagrams with explicit
Hadamard gates ---replacing the Hadamard wires--- and only green spiders, that
produces locally minimal labellings by propagating the labels over individual
spiders. 
A diagram resulting from the circuit extraction in Section~\ref{sec:extraction}
can be transformed to have only green spiders by applying the color-change rule
\HadamardRule. This restriction is purely for simplicity in our definition, as
the equivalent functions can be defined easily for red spiders.

We introduce a number of operations over the labels.
First, a binary function representing the result of combining two wires via a
phaseless green spider, $\tJoin:\labels\times\labels\to\labels$.
\[
\begin{array}{lllll}
  \Z\tJoin A=Z \quad & \X\tJoin A=A \quad & \Y\tJoin\Y=\X \quad & \Q\tJoin\Y = \Q \quad & \bot\tJoin\Y = \bot \\
  A\tJoin\Z=Z & A\tJoin\X=A & \Y\tJoin\Q=\Q & \Q\tJoin\Q=\Q & \bot\tJoin\Q=\Z \\
  & & \Y\tJoin\bot=\bot & \Q\tJoin\bot=\Z & \bot\tJoin\bot=\bot \\
\end{array}
\]
Notice that $(\labels, \tJoin)$ is a commutative monoid with \X\ as neutral
element.
We also define a ``Z rotation" operation for $\alpha\in[0,2\pi)$,
$\rot: \labels \to \labels$.
\[
\begin{array}{llllll}
  \multicolumn{2}{l}{\rot(\Z) = \Z \qquad\qquad} &
  \multicolumn{2}{l}{\rot(\Q) = \Q \qquad\qquad} &
  \multicolumn{2}{l}{\rot(\bot) = \bot} \\
  \multicolumn{3}{l}{%
    \rot(\X) =
    \begin{cases}
      \X & \text{if $\alpha \in \{0,\pi\}$} \\
      \Y & \text{if $\alpha \in \{\frac12\pi,\frac34\pi\}$} \\
      \Q & \text{otherwise}
    \end{cases}
    \qquad
  } &
  \multicolumn{3}{l}{%
    \rot(\Y) =
    \begin{cases}
      \Y & \text{if $\alpha \in \{0,\pi\}$} \\
      \X & \text{if $\alpha \in \{\frac12\pi,\frac34\pi\}$} \\
      \Q & \text{otherwise}
    \end{cases}
  }
\end{array}
\]
This corresponds to the identity if $\alpha\in\{0,\pi\}$, and in general
\(
  \rot(A) \tJoin b \geq \rot(A \tJoin B)
\).

Finally, we define a function $\hadamLbl$ representing the application of the
Hadamard operation over a label, $\hadamLbl: \labels \to \labels$.
\[
  \hadamLbl(\Q) = \Q
  \qquad
  \hadamLbl(\X) = \Z
  \qquad
  \hadamLbl(\Z) = \X
  \qquad
  \hadamLbl(\Y) = \Y
  \qquad
  \hadamLbl(\bot) = \bot
\]

Our classical detection procedure starts by labelling any classical input or
output with a $\Z$ label, and any \ground with a $\bot$ label, and the rest of
the diagram wires with $\Q$.

It then proceeds by propagating the labels using the
following rules:
\[
  \tikzfig{classical/had-rule0}
  \;\;\overset{\mbox{\scriptsize\ClassHadamRule}}{\to}\;\;
   \tikzfig{classical/had-rule1}
\]
\[
  \tikzfig{classical/Zn-p-rule0}
  \;\;\overset{\mbox{\scriptsize\ClassZRule}}{\to}\;\;
  \tikzfig{classical/Zn-p-rule1}
\]
For any labels $A,B,C,D,E,F\in\labels$.

We apply these rules until there are no more labels to change. Since each time we
replace labels with lesser ones in the order, the procedure terminates.
Finally, we can interpret wires with a classical label in any direction as
classically realisable.
%possibly adding Hadamard operations or swapping the color of spiders to change
%the basis for $\X$ and $\Y$ wires.
We show an example of a labeled diagram in
Figure~\ref{fig:classicalization-example}.

\begin{figure}[ht]
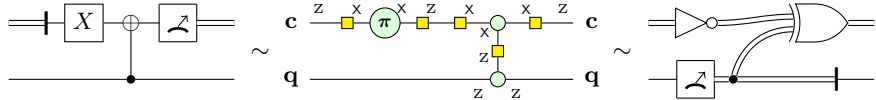
%
  \noindent
  \[
    \tikzfig{examples/classicalization-start-circ}
    \;\sim\;
    %\tikzfig{examples/classicalization-start}
    %\;\xrightarrow{Labelling}\;
    \tikzfig{examples/classicalization-labelled}
    %\;\xrightarrow{\substack{Color\\change}}\;
    %\tikzfig{examples/classicalization-colorChange}
    \;\sim\;
    \tikzfig{examples/classicalization-labelled-circ}
  \]
  \caption{Example of a local-search classicalization.
    \Q\ labels are omitted.}%
  \label{fig:classicalization-example}%
\end{figure}

\begin{lemma}%
  \label{lem:localClassicalization}
  The local-search labelling algorithm produces a valid labeling according to
  the standard interpretation of the \zxGND\ calculus.
\end{lemma}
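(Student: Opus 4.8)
<br>

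The plan is to prove that the local-search labelling algorithm produces a valid labelling by induction on the number of rule applications, maintaining validity as an invariant. First I would observe that the initial labelling is valid: classical inputs and outputs are labelled $\Z$, and since a classical wire is precisely one whose state lives in the computational-basis-diagonal density subspace $\interpretLbl{\Z}$, cutting any wire and feeding states from $\bigotimes_i \interpretLbl{A_i} \otimes \bigotimes_j \interpretLbl{D_j}$ produces (at the cut) a state that trivially lies in $\interpretLbl{\Q} = \C^{2\times2}$ since all interior wires are labelled $\Q$; and $\ground$-wires labelled $\bot$ are valid because a ground followed by a maximally mixed state is exactly what $\interpretLbl{\bot}$ describes. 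So the base case reduces to checking the semantics of the generators and the definitions of the label subspaces.

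The inductive step is the core of the argument: I would show that each rule $\ClassHadamRule$ and $\ClassZRule$ preserves validity. The key is that validity is a \emph{local} condition that can be verified spider-by-spider once we know what states can flow along each incident wire. For the Hadamard rule, I would use the fact that conjugating a density matrix by $H$ sends $\interpretLbl{\Z}$ to $\interpretLbl{\X}$, $\interpretLbl{\X}$ to $\interpretLbl{\Z}$, $\interpretLbl{\Y}$ to $\interpretLbl{\Y}$, and fixes $\interpretLbl{\Q}$ and $\interpretLbl{\bot}$ — precisely the defining equations of $\hadamLbl$ — so replacing the label on the far side of a Hadamard box by its $\hadamLbl$-image keeps the set of realisable states on that wire inside the claimed subspace. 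For the green-spider rule $\ClassZRule$, I would verify that a phaseless (or $\alpha$-phase) green spider, viewed as a completely positive map, sends a tensor of states in the subspaces of its input labels into the subspace given by the iterated $\tJoin$ of those labels, possibly post-composed with $\rot[\alpha]$ on the output legs; the monoid properties of $(\labels,\tJoin)$ with neutral element $\X$ and the inequality $\rot(A)\tJoin B \ge \rot(A\tJoin B)$ already stated in the excerpt are exactly what make this propagation sound in both directions along each wire. Because the rules only \emph{lower} labels in the partial order, and because any $B \ge A$ with $A$ valid is itself valid, the new labelling is still at least as permissive as a valid one, hence valid.

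The main obstacle I anticipate is handling the bidirectionality of validity: a labelling assigns a pair of labels to each wire (one at each end), and validity requires that \emph{cutting at any wire} and rearranging outputs-to-inputs yields states in the product of the two cut labels' subspaces, for \emph{every} choice of valid boundary state. Proving this for the composite diagram from spider-local reasoning requires a compositional argument — essentially that if every spider "respects" its incident labels in the sense that it maps product states in the incoming subspaces to states in the outgoing subspaces (in whichever temporal direction the cut induces), then gluing spiders preserves this. I would make this precise by noting that cutting a wire and bending the diagram into a state on $\interpretLbl{E}\otimes\interpretLbl{F}$ is itself a \zxGND diagram built from the same generators, and then propagate the subspace membership along a topological traversal from the boundary to the cut, invoking the generator-level facts above at each node. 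The subtlety is that a green spider with some incident wires "incoming" and others "outgoing" relative to the cut must satisfy a symmetric version of the $\tJoin$-compatibility, which follows because the relevant CP maps (copy in the computational basis, etc.) are symmetric under transposition on the corresponding subspaces; I would spell this out as the one genuinely non-routine lemma, then the rest is bookkeeping.
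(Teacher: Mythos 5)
Your proposal follows essentially the same route as the paper: establish that the initial labelling is valid and then show that each application of \ClassHadamRule\ and \ClassZRule\ preserves validity, reducing the rule-level claims to the action of $H$-conjugation and of the spider's Kraus operators on the label subspaces. The only place the paper is more concrete is the $n$-ary case of \ClassZRule, which it handles by an explicit induction on arity that splits a large spider into arity-$\le 3$ pieces --- exactly the ``bookkeeping'' you defer.
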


\begin{proof}
  By proving that both rules \ClassHadamRule\ and \ClassZRule\ produce valid
  labellings, cf.
  \aplasorlong{long version on arXiv}{Appendix~\ref{sec:appendix-proofs}}.
\end{proof}

\ifaplas
\else
  \begin{remark}
    The local-search labelling does not always reach a globally minimum labeling.
    Consider for example the following input diagram.
    \[
    \tikzfig{classical/cnot-cnot-}
    \]
    Here, although the diagram is equivalent to the identity and the outputs could
    theoretically be labelled as $\Z$, the entangled wires in the middle do not
    allow for local propagation of the classical labels past the double controlled
    Z gates.
  \end{remark}
\fi

\section{Implementation}%
\label{sec:implementation}

We have implemented each of the algorithms presented in this work as an
extension to the open source Python library
\textit{PyZX}~\cite{kissinger_pyzx_2020} by modifying its implementation of ZX
diagrams to admit \zxGND\ primitives. A repository with the code is
available at \url{http://github.com/aborgna/pyzx/tree/zxgnd}.
We additionally implemented a naïve \zxGND\ extension of the pure Clifford
optimization for comparison purposes, which doesn't use any of our \ground
rewriting rules.
When applied to pure quantum circuits, our algorithm does not perform additional
optimizations after the Clifford procedure and therefore achieves the same
benchmark results recorded by Duncan et al.\ on the circuit set described by Amy et
al.~\cite{amy_polynomial-time_2014}.

We tested the procedure over two classes of randomly generated circuits, and
measured the size of the resulting diagram as the number of spiders left after
the optimization. This metric correlates with the size of the final circuit,
although the algorithmic noise caused by the arbitrary choices in the extraction
procedure may result in some cases in bigger extracted circuit after a reduction
step.

The first test generates Clifford+T circuits with measurements by applying randomly
chosen gates from the set \{\CNOT, S, HSH, T, Meas\} over a fixed number of
qubits, where Meas are measurement gates on a qubit immediately followed by a
qubit initialization.  We fix the probability of choosing a \CNOT, S, or HSH
gate to $0.2$ each and vary the probabilities for T and Meas in the remaining
$0.4$. These circuits present a general worst case, where there is no
additional classical structure to exploit during the hybrid circuit
optimization. 

The second type of generated operations are classical parity-logic circuits.
These consist on a number of classical inputs, fixed at 10, where we apply
randomly chosen operations from the set \{NOT, XOR, Fanout\} with probabilities
0.3, 0.3, and 0.4 respectively.

In Figure~\ref{fig:benchmarks} we compare the results of our optimization using
the Clifford optimization as baseline.
Figure~\ref{fig:benchmarks:clifford} shows the reduction of diagram size when
running the algorithm on randomly generated Clifford+T circuits with
measurement. We vary the probability of generating a measurement gate between
$0$ and $0.2$ while correspondingly changing the probability of generating a
T-gate between $0.4$ and $0.2$, and show the results for different combinations
of qubit and gate quantities.
We remark that the optimization produces noticeably smaller diagrams once enough
\ground generators start interacting with each other. There is a critical
threshold of measurement gate probability, specially visible in the cases with 8
qubits and 1024 gates, where with high probability the outputs of the diagram become
disconnected from the inputs due to the \ground interactions. This results in
our algorithm optimizing the circuits to produce a constant result while
discarding their input.

Figure~\ref{fig:benchmarks:parity} shows the comparison of diagram size between
our procedure and the Clifford optimization when run over classical parity
circuits.
The optimization produces consistently smaller diagrams, generally
achieving the theoretical minimal number of \ground generators, equal to the
number of inputs.
We further remark that in all of the tested cases the classicalization procedure
was able to detect that all the extracted operations on the optimized
parity-logic circuit were classically realisable.

\begin{figure}[tb]
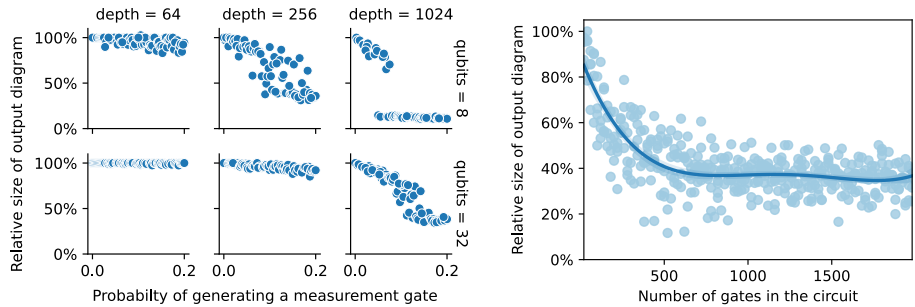
%
  % !!! Compile with LaTeX flag '--shell-escape' to regenerate the images
  \begin{subfigure}[t]{0.5\textwidth}
       \includesvg[height=13em]%
         {benchmarks/sizeReduction-measurementProbability-cliffordT-alternative.svg}
       \caption{Diagram size reduction on Clifford+T circuits with
         measurements.}%
       \label{fig:benchmarks:clifford}
       \hfill
   \end{subfigure}
   \begin{subfigure}[t]{0.5\textwidth}
       \hfill
       \includesvg[height=12.5em]%
         {benchmarks/sizeReduction-nGates-parityCirc.svg}
       \caption{Diagram size reduction on parity-logic circuits.}%
       \label{fig:benchmarks:parity}
   \end{subfigure}
  \vspace*{-1em}
  \caption{Benchmark results on randomly generated diagrams.}%
  \label{fig:benchmarks}%
\end{figure}

The runtime of our algorithm implementation is polynomial in the size of the
circuit. As with the Clifford optimization, the cost of our optimization and
extraction processes is dominated by the Gauss elimination steps.
For the ground-node rewriting rules, our unoptimized implementation is roughly
$\bigo(n^2 * k^2)$ in the worst case with $k$ being the number of measurement gates and $n$ the number of gates,
but in practice it behaves cubically on the number of gates due to the sparseness of the diagrams.
The implementation was not developed with a focus on the runtime cost, and some possible optimizations may reduce this bound.

\section{Discussion and future work}%
\label{sec:discussion}

We introduced an optimization procedure for optimizing hybrid classical circuits
inspired by previous work on pure circuit optimization using the ZX calculus.
The process is composed by a translation step, the optimization of the diagrams,
an extraction back into circuits and finally a detection of
classically-realisable operations.
Our translation operation produces diagrams which admit a focused gFlow, a
property that we maintain during the optimization and require during the
extraction.
For our optimization step we defined a series of rewrite strategies to reduce
the size of the diagrams, and introduced a strategy to find additional
optimization opportunities by applying Gaussian elimination on the biadjacency
matrix of the ground-cut of the diagram. Our extraction procedure initially
generates circuits without classical operations. Hence, we introduced a
classicalization heuristic for arbitrary circuits that is able to replace
quantum operations by their classical equivalent, where possible.

Kissinger and van de Wetering~\cite{aleks_kissinger_reducing_2020} defined a
procedure based on the Clifford optimization to reduce the T-gate count in
quantum circuits by defining new structures in the graphs called \textit{phase
gadgets} and operating over their phases. Their work can be easily extended to
\zxGND, where the \ground generators act as an absorbing element for the
gadgets phases. However, rules such as \RowSumGndRule\ prove to be strictly
more powerful than applying the pure phase gadget rules over \ground-gadgets.
In general, the phase-gadget optimization affects an independent section of the
structure of the diagram compared to ours, and can be applied  with it.

During our definition of the optimization process we decided to restrict the
input circuits to parity classical logic, excluding \AND\ and \OR\ gates. This
does not raise from an inherent limitation of the system but from a practical
standpoint.  The \zxGND\ calculus is able to represent \AND\ operations in what
equates to the Clifford+T decomposition of the Toffoli gate, introducing
multiple T-gates and \CNOT\ gates to the circuit~\cite{Selinger_2013}. The
multiple spiders would be dispersed around the diagram during the optimization
step, potentially breaking the pattern formed by the \AND\ gate and replacing it
with multiple quantum operations. This can produce the unexpected result of
introducing expensive quantum operations in an originally pure classical
circuit.
A possible next step for this work would be to use an alternative diagrammatic
representation better adapted to represent arbitrary boolean circuit such as the
ZH calculus.

\paragraph{Acknowledgements}
{\small The authors would like to thank Kostia Chardonnet and Renaud Vilmart for their
suggestions on the classicalization problem, and John van de Wetering for his
help with the pyzx library.
%f
This work was supported in part by the French National Research Agency (ANR)
under the research project SoftQPRO ANR-17-CE25-0009-02, and by the DGE of the
French Ministry of Industry under the research project PIA-GDN/QuantEx
P163746-484124, and by the project STIC-AmSud project Qapla' 21-SITC-10.}

%\begin{remark}%
%  \label{rem:ground-teleportation}
%  The following \textit{ground teleportation} rule can also be derived
%  from rule~\PauliPivotGndRule,
%  \[
%    \tikzfig{simpl/gnd-rewriting/gndTeleport}
%  \]
%  While the rule does not remove any spider,
%  if there is only one $\beta$ spider then we can use the discarding
%  rule~\GndHDiscardRule\ followed by rule~\PauliPivotGndRule\ 
%  to delete two nodes from the graph.
%  \TODO{Show example of when is this useful}
%\end{remark}

%
% ---- Bibliography ----
%
% BibTeX users should specify bibliography style 'splncs04'.
% References will then be sorted and formatted in the correct style.
%
% \bibliographystyle{splncs04}
% \bibliography{paper}

\begin{thebibliography}{10}
\providecommand{\url}[1]{\texttt{#1}}
\providecommand{\urlprefix}{URL }
\providecommand{\doi}[1]{https://doi.org/#1}

\bibitem{amy_polynomial-time_2014}
Amy, M., Maslov, D., Mosca, M.: Polynomial-time {T}-depth {Optimization} of
  {Clifford}+{T} circuits via {Matroid} {Partitioning}. IEEE Transactions on
  Computer-Aided Design of Integrated Circuits and Systems  \textbf{33}(10),
  1476--1489 (Oct 2014). \doi{10.1109/TCAD.2014.2341953},
  \url{http://arxiv.org/abs/1303.2042}, arXiv: 1303.2042

\bibitem{de_beaudrap_zx_2020}
de~Beaudrap, N., Horsman, D.: The {ZX} calculus is a language for surface code
  lattice surgery. Quantum  \textbf{4}, ~218 (Jan 2020).
  \doi{10.22331/q-2020-01-09-218}, \url{http://arxiv.org/abs/1704.08670},
  tex.ids: debeaudrapZXCalculusLanguage2020a arXiv: 1704.08670

\bibitem{CJPV19completenessMix}
Carette, T., Jeandel, E., Perdrix, S., Vilmart, R.: Completeness of graphical
  languages for mixed states quantum mechanics (2019)

\bibitem{chancellor2016coherent}
Chancellor, N., Kissinger, A., Roffe, J., Zohren, S., Horsman, D.: {Graphical
  Structures for Design and Verification of Quantum Error Correction}. arXiv
  preprint arXiv:1611.08012  (2016)

\bibitem{coecke_interacting_2008}
Coecke, B., Duncan, R.: Interacting {Quantum} {Observables}. In: Automata,
  {Languages} and {Programming}. pp. 298--310. Lecture {Notes} in {Computer}
  {Science}, Springer, Berlin, Heidelberg (2008).
  \doi{10.1007/978-3-540-70583-3\_25}

\bibitem{bob_coeke_aleks_kissinger_picturing_2017}
Coeke, B., Kissinger, A.: Picturing {Quantum} processes - {A} diagrammatic
  approach. Cambridge University Press (2017)

\bibitem{cross2017openQASM}
Cross, A.W., Bishop, L.S., Smolin, J.A., Gambetta, J.M.: Open quantum assembly
  language (2017)

\bibitem{DKPW2019qcircSimpl}
Duncan, R., Kissinger, A., Perdrix, S., van~de Wetering, J.: Graph-theoretic
  simplification of quantum circuits with the zx-calculus (2019)

\bibitem{duncan_verifying_2014}
Duncan, R., Lucas, M.: Verifying the {Steane} code with {Quantomatic}.
  Electronic Proceedings in Theoretical Computer Science  \textbf{171},  33--49
  (Dec 2014). \doi{10.4204/EPTCS.171.4}, \url{http://arxiv.org/abs/1306.4532},
  arXiv: 1306.4532

\bibitem{duncan2010rewriting}
Duncan, R., Perdrix, S.: Rewriting measurement-based quantum computations with
  generalised flow. In: International Colloquium on Automata, Languages, and
  Programming. pp. 285--296. Springer (2010).
  \doi{10.1007/978-3-642-14162-1\_24}

\bibitem{Green2013quipper}
Green, A.S., Lumsdaine, P.L., Ross, N.J., Selinger, P., Valiron, B.: Quipper.
  ACM SIGPLAN Notices  \textbf{48}(6), ~333 (Jun 2013).
  \doi{10.1145/2499370.2462177},
  \url{http://dx.doi.org/10.1145/2499370.2462177}

\bibitem{heyfron_efficient_2018}
Heyfron, L., Campbell, E.T.: An {Efficient} {Quantum} {Compiler} that reduces
  {T} count. arXiv:1712.01557 [quant-ph]  (Jun 2018),
  \url{http://arxiv.org/abs/1712.01557}, arXiv: 1712.01557

\bibitem{jeandel_completeness_2019}
Jeandel, E., Perdrix, S., Vilmart, R.: Completeness of the {ZX}-{Calculus}.
  arXiv:1903.06035 [quant-ph]  (Oct 2019),
  \url{http://arxiv.org/abs/1903.06035}, arXiv: 1903.06035

\bibitem{jozsa_mbqc}
Jozsa, R.: An introduction to measurement based quantum computation. Quantum
  Information Processing  \textbf{199} (09 2005)

\bibitem{khammassi2018cQASM}
Khammassi, N., Guerreschi, G.G., Ashraf, I., Hogaboam, J.W., Almudever, C.G.,
  Bertels, K.: cqasm v1.0: Towards a common quantum assembly language (2018)

\bibitem{kissinger_pyzx_2020}
Kissinger, A., van~de Wetering, J.: {PyZX}: {Large} {Scale} {Automated}
  {Diagrammatic} {Reasoning}. Electronic Proceedings in Theoretical Computer
  Science  \textbf{318},  229--241 (May 2020). \doi{10.4204/EPTCS.318.14},
  \url{http://arxiv.org/abs/1904.04735}, arXiv: 1904.04735

\bibitem{aleks_kissinger_reducing_2020}
Kissinger, A., van~de Wetering, J.: Reducing {T}-count with the {ZX}-calculus.
  arXiv:1903.10477 [quant-ph]  (Jan 2020),
  \url{http://arxiv.org/abs/1903.10477}, arXiv: 1903.10477

\bibitem{Li_assertions}
Li, G., Zhou, L., Yu, N., Ding, Y., Ying, M., Xie, Y.: Projection-based runtime
  assertions for testing and debugging quantum programs. Proc. ACM Program.
  Lang.  \textbf{4}(OOPSLA) (Nov 2020). \doi{10.1145/3428218},
  \url{https://doi.org/10.1145/3428218}

\bibitem{mhallaWhichGraphStates2014}
Mhalla, M., Murao, M., Perdrix, S., Someya, M., Turner, P.S.: Which {{Graph
  States}} are {{Useful}} for {{Quantum Information Processing}}? In: Theory of
  {{Quantum Computation}}, {{Communication}}, and {{Cryptography}}, vol.~6745,
  pp. 174--187. {Springer Berlin Heidelberg}, {Berlin, Heidelberg} (2014).
  \doi{10.1007/978-3-642-54429-3\_12}

\bibitem{miller-bakewell_finite_2020}
Miller-Bakewell, H.: Finite {Verification} of {Infinite} {Families} of
  {Diagram} {Equations}. Electronic Proceedings in Theoretical Computer Science
   \textbf{318},  27--52 (May 2020). \doi{10.4204/EPTCS.318.3},
  \url{http://arxiv.org/abs/1904.00706}, arXiv: 1904.00706 version: 2

\bibitem{Selinger_2013}
Selinger, P.: Quantum circuits of t-depth one. Physical Review A
  \textbf{87}(4) (Apr 2013). \doi{10.1103/physreva.87.042302},
  \url{http://dx.doi.org/10.1103/PhysRevA.87.042302}

\bibitem{Steiger2018projectQ}
Steiger, D.S., H{\"a}ner, T., Troyer, M.: Projectq: an open source software
  framework for quantum computing. Quantum  \textbf{2}, ~49 (Jan 2018).
  \doi{10.22331/q-2018-01-31-49},
  \url{http://dx.doi.org/10.22331/q-2018-01-31-49}

\bibitem{vandewetering2020zxcalculus}
van~de Wetering, J.: {ZX-calculus for the working quantum computer scientist}.
  arXiv preprint arXiv:2012.13966  (2020)

\bibitem{Zhou_2019}
Zhou, H., Byrd, G.T.: Quantum circuits for dynamic runtime assertions in
  quantum computation. IEEE Computer Architecture Letters  \textbf{18}(2),
  111–114 (Jul 2019). \doi{10.1109/lca.2019.2935049},
  \url{http://dx.doi.org/10.1109/LCA.2019.2935049}

\end{thebibliography}

\ifaplas
\else

  \newpage
  \appendix

  \section{Semantics of the ZX-ground calculus}%
\label{sec:semantics}

Carette et al.~\cite{CJPV19completenessMix} define an interpretation of \zxGND\
diagrams using a CPM construction (cf. Reference~\cite{Selinger07DCClosCat}). We
describe it here without the categorical language, as an interpretation of
diagrams into density matrices and completely positive maps modulo scalars.

Let $\Dn{n} \subseteq \C^{2^n \times 2^n}$ be the set of n-qubit density
matrices.  There exist a functor $\interpretZX{\cdot} : \zxGND \to
\mathbf{CPM(Qubit)}$ which associates to any diagram $D : n \to m$ a completely
positive map $\interpretZX{D} : \Dn{n} \to \Dn{m}$, inductively defined as
follows.

\begin{spreadlines}{1em}
\begin{gather*}%
  \interpretZX{D_1 \otimes D_2}
  \;:=\;
  \interpretZX{D_1} \otimes \interpretZX{D_2}
  \quad
  \interpretZX{D_2 \circ D_1}
  \;:=\;
  \interpretZX{D_2} \circ \interpretZX{D_1}
  \quad
  \interpretZX{\tikzfig{unit-diagram}}
  \;:=\;
  \begin{pmatrix} 1 \end{pmatrix}
  \qquad
  \interpretZX{\tikzfig{zxGnd/elem/spiderZsingle}}
  \;:=\;
    \begin{pmatrix}1+e^{2i\alpha}\end{pmatrix}
  \\
  \interpretZX{\tikzfig{zxGnd/elem/wire}}
  \;:=\;
    id
  \qquad
  \interpretZX{\tikzfig{zxGnd/elem/swap}}
  \;:=\; \rho \mapsto U~\rho~U^\dagger
  {\scriptstyle
    \text{\ where\ }
    U = \ketbra{00}{00} + \ketbra{01}{10} + \ketbra{10}{01} + \ketbra{11}{11}
  }
  \\
  \interpretZX{\tikzfig{zxGnd/elem/hadamard}}
  \;:=\; \rho \mapsto H~\rho~H^\dagger \text{\ where\ }
  H = \frac{1}{\sqrt 2} \begin{pmatrix}1 & 1 \\ 1 & -1\end{pmatrix}
  \\
  \interpretZX{\tikzfig{zxGnd/elem/ground}}
  \;:=\; \ketbra{0}{0} + \ketbra{1}{1}
  \qquad
  \interpretZX{\tikzfig{zxGnd/elem/groundEnd}}
  \;:=\; \rho \mapsto \bra{0} \rho \ket{0} + \bra{1} \rho \ket{1}
  \\
  \interpretZX{\tikzfig{zxGnd/elem/spiderZ}}
  \;:=\; \rho \mapsto A~\rho~A^\dagger
  {\scriptstyle
    \text{\ if $n+m>0$, where\ }
    A = \ketbra{0^m}{0^n} + e^{i\alpha} \ketbra{1^m}{1^n}
  }
  \\
  \interpretZX{\tikzfig{zxGnd/elem/spiderX}}
  \;:=\;
  \interpretZX{\tikzfig{zxGnd/elem/hadamard}}^{\otimes m} \circ
  \interpretZX{\tikzfig{zxGnd/elem/spiderZ}} \circ
  \interpretZX{\tikzfig{zxGnd/elem/hadamard}}^{\otimes n}
\end{gather*}
\end{spreadlines}

From the compositional rules, we can see that a ground attached to a green
spider corresponds to a measurement over the computational basis.
\[
  \interpretZX{\tikzfig{zxGnd/elem/spiderZgnd}}
  \;:=\; \rho \mapsto \ketbra{0^m}{0^n}~\rho~\ketbra{0^n}{0^m} + \ketbra{1^m}{1^n}~\rho~\ketbra{1^n}{1^m}
\]

It follows from rule \HadamardRule\ that the red \ground-spider corresponds to a
measurement over the diagonal basis:
\[
  \interpretZX{\tikzfig{zxGnd/elem/spiderXgnd}}
  \;:=\; \rho \mapsto \ketbra{+^m}{+^n}~\rho~\ketbra{+^n}{+^m} + \ketbra{-^m}{-^n}~\rho~\ketbra{-^n}{-^m}
\]

Notice that, in accordance to rule \GndDiscardRule, the phase of a
\ground-spider is irrelevant.

  \section{Extended proofs}%
\label{sec:appendix-proofs}

\subsection{Proofs of section~\ref{sec:graph-like}}%

\begin{lemma*}[\ref{lem:weak-to-strict-graphlike}]%
  \textit{%
  There exists an algorithm to transform an arbitrary \zxGND diagram into an
  equivalent strictly graph-like diagram.
  }
\end{lemma*}

\begin{proof}
  The strict graph-like condition limits to 1 the number of input, output, and
  ground generators connected to each spider, in addition to the weakly
  graph-like restrictions. By Proposition~\ref{lem:eq-to-weak-graphLike}, the
  diagram can be rewritten into an equivalent weakly graph-like diagram $D$.  We
  describe an algorithm which modifies $D$ to comply with the additional
  restriction.

  For each spider $v$ in $D$ connected to at least two inputs, outputs or
  \ground generators, add two Z-spiders and Hadamard wires to each connected
  input and output as follows.
  \begin{equation}%
    \label{eq:toGL}
    \tikzfig{zxGnd/toGL}
  \end{equation}
  Notice that $v$ may have at most one connected input.
  The introduced spiders correspond to identity operations, and therefore the
  procedure generates a strictly graph-like diagram equivalent to $D$.
\end{proof}

\subsection{Proofs of section~\ref{sec:translation}}%

\begin{lemma*}[\ref{lem:circ-to-graphLike}]%
  \textit{%
    The \zxGND-diagram resulting from the translation $\toZxGND{\cdot}$ is
    weakly graph-like.
  }
\end{lemma*}

\begin{proof}%
  Notice that all the translation rules aside from the serial composition
  generate weakly graph-like diagrams trivially.

  For the serial composition, notice that both $\toZxGND{C}$ and $\toZxGND{C'}$
  are weakly  graph-like by the induction hypothesis and therefore all spiders
  in the resulting diagram are Z-spiders, all inputs and outputs are connected
  to Z-spiders, no two inputs are connected to the same spider, and all spiders
  connected to \ground-generators have phase 0.  The internal edges added by the
  composition will therefore connect two green spiders, which will be merged by
  the fusion rule application.

  The fusion step may create spiders connected to two ground generators, one of
  which is removed by the application of rule \GndDoubleRule.
  It may also generate parallel Hadamard wires, which are removed by the
  application of rule \ParHRule.

  Therefore, the translation generates weakly graph-like \zxGND-diagrams.
\end{proof}

For the proof of Lemma~\ref{lem:weakGL-gFlow}, we use a weaker version of the
focused gFlow invariant called
\textit{causal flow}~\citep{mhallaWhichGraphStates2014}.

\begin{definition}
  Given an open graph $G$, a \textit{causal flow} $(f,\prec)$ on $G$ consists of a function
  $f: \overline{T} \to \overline{S}$ and a partial order $\prec$ on the set $V$ satisfying the
  following properties:
  \begin{enumerate}
    \item $f(v) \sim v$
    \item $v \prec f(v)$
    \item If $u \sim f(v)$ and $u \neq v$ then $v \prec u$
  \end{enumerate}
\end{definition}
where $u \sim v$ if $u$ is connected to $v$ in the graph, and $\overline{A} =
V(G) / A$.

Duncan et al.~\cite[Theorem B.3]{DKPW2019qcircSimpl} prove the following lemma relating both flow
properties:

\begin{proposition}%
  \label{lem:causal-to-focused-flow}
  If $G$ admits a causal flow, then there exists a valid focused gFlow for $G$.
\end{proposition}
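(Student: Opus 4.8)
The plan is to factor the argument through an ordinary (non-focused) generalised flow. First I would turn the causal flow $(f,\prec)$ into a gFlow with the \emph{same} partial order by taking singletons: set $g_0(u) \coloneqq \{f(u)\}$ for every $u \in \overline{T}$. Since $f(u) \in \overline{S}$ we have $g_0(u) \subseteq \overline{S}$, and $\odd_G(g_0(u)) = N(f(u))$, so the three clauses of the causal-flow definition translate directly: $f(u)\sim u$ gives $u \in \odd_G(g_0(u))$; $u \prec f(u)$ gives the order condition on $g_0(u)$; and ``$w\sim f(u)$ with $w \neq u$ implies $u\prec w$'' gives $u\prec w$ for every $w \in \odd_G(g_0(u)) \setminus \{u\}$. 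Thus $(g_0,\prec)$ is a gFlow, though in general not a focused one, because $N(f(u))$ may contain non-sink vertices other than $u$ (all of them, however, lying strictly above $u$ in $\prec$).

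Second, I would apply the standard focusing procedure (following \cite{mhallaWhichGraphStates2014}) to $(g_0,\prec)$, leaving $\prec$ untouched. Fix a linear extension of $\prec$ and process the vertices of $\overline{T}$ from top to bottom. When processing $u$, while $\odd_G(g(u)) \cap \overline{T}$ contains some $v \neq u$, choose such a $v$; because $u \prec v$, the vertex $v$ has already been treated and now satisfies $\odd_G(g(v)) \cap \overline{T} = \{v\}$, so replacing $g(u)$ by $g(u) \triangle g(v)$ turns $\odd_G(g(u))$ into $\odd_G(g(u)) \triangle \odd_G(g(v))$; this deletes $v$ from the $\overline{T}$-part, keeps $u$ in it (as $u \neq v$), and introduces nothing new there. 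Each iteration strictly shrinks $\odd_G(g(u)) \cap \overline{T}$, so the inner loop halts with $\odd_G(g(u)) \cap \overline{T} = \{u\}$; and since $g(u) \triangle g(v)$ is again a subset of $\overline{S}$, while $u \prec v$ together with the already-established properties of $g(v)$ propagate the order condition, the resulting $(g,\prec)$ is a focused gFlow on $G$.

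The heart of the argument is really the first paragraph, the singleton reinterpretation of $f$; the second step is bookkeeping with the parity identity $\odd_G(A \triangle B) = \odd_G(A) \triangle \odd_G(B)$. The only point that needs genuine care is checking that the focusing loop, processed in $\prec$-decreasing order, both terminates and never invalidates the fixed order $\prec$ after a symmetric-difference update. This is exactly what is established, in the measurement-calculus language, in \cite[Theorem B.3]{DKPW2019qcircSimpl} (which in turn rests on the maximally-delayed gFlow machinery of \cite{mhallaWhichGraphStates2014}), so no deeper obstruction arises once the open-graph translation is set up as above.
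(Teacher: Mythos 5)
Your argument is correct, but it is worth noting that the paper does not actually prove this proposition at all: it is stated as a black-box citation of Duncan et al.\ \cite[Theorem B.3]{DKPW2019qcircSimpl}, so you have supplied the argument that the paper outsources. What you give is the standard two-stage route, and it checks out against the paper's definitions: the singleton assignment $g_0(u)=\{f(u)\}$ satisfies $\odd_G(g_0(u))=N(f(u))$, and the three causal-flow clauses translate exactly as you say into $u\in\odd_G(g_0(u))$, the order condition on $g_0(u)$, and $u\prec w$ for every other $w$ in the odd neighbourhood; the focusing pass then only needs the parity identity $\odd_G(A\triangle B)=\odd_G(A)\triangle\odd_G(B)$ together with the observation that, since $\odd_G(g(v))\cap\overline{T}=\{v\}$ for the already-processed $v$, each update removes $v$ from $\odd_G(g(u))\cap\overline{T}$ without inserting any new element of $\overline{T}$, so the invariant ``every $v\neq u$ in $\odd_G(g(u))\cap\overline{T}$ satisfies $u\prec v$'' persists and the loop terminates with a focused gFlow for the \emph{unchanged} order $\prec$. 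The one blemish is your closing paragraph: after constructing the focusing loop explicitly you then appeal to \cite[Theorem B.3]{DKPW2019qcircSimpl} for its correctness, which is circular as written (that theorem \emph{is} the statement being proved). You should either drop that appeal --- your own two paragraphs already contain a complete proof --- or present the whole thing as a citation, as the paper does; mixing the two weakens an otherwise clean self-contained argument.
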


\begin{lemma}%
  \label{lem:weak-to-strict-graphlike-flow}
  If $D$ is a weakly graph-like \zxGND diagram and $G(D)$ admits a causal flow,
  the strictly graph-like diagram generated by the algorithm described in
  Lemma~\ref{lem:weak-to-strict-graphlike} admits a causal flow.
\end{lemma}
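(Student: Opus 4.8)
The plan is to decompose the transformation of Lemma~\ref{lem:weak-to-strict-graphlike} into a sequence of \emph{elementary moves}, each inserting one chain of two fresh Z-spiders on a single boundary (input or output) wire of a single conflicted spider, and to show that every such move sends a causal flow to a causal flow; iterating over all conflicted spiders and all their boundary wires then produces the strictly graph-like diagram equipped with a causal flow.

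Fix a causal flow $(f,\prec)$ on $G(D)$ and a spider $v$ connected to at least two of its inputs/outputs/\ground generators. The useful structural fact is that such a $v$ always lies in the sink set $T$: it touches at most one input, so among its $\geq 2$ boundary/\ground connections there is an output or a \ground. Hence $v$ is never in the domain $\overline T$ of $f$, and $v$ can be a value of $f$ only when it has no input. For the \emph{output move} resolving an output $o$ of $v$ --- adding a fresh $a$ (connected to $o$, hence a new sink), a fresh interior $b$, and edges $\{a,b\},\{b,v\}$ --- I extend the flow by $f'(b):=a$, leave $f'$ unchanged elsewhere, and, if dropping $o$ removed $v$ from $T$, also set $f'(v):=b$; the order becomes the transitive closure of $\prec$ together with $b\prec a$, with $v\prec b$ when $f'(v)$ was just defined, and with $u\prec b$ for every $u$ with $f(u)=v$. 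The \emph{input move} resolving $v$'s input is symmetric: $f'(c):=d$, $f'(d):=v$ (type-correct since $v$ then leaves the source set), and $\prec'$ gains $c\prec d$, $d\prec v$, and $d\prec w$ for every current neighbour $w\sim v$. For each move one verifies the three causal-flow axioms: 1 and 2 are immediate from the new edges and the added relations; axiom 3 needs rechecking only for flow pairs whose image is a fresh vertex (neighbourhoods of size $\le 2$) or the vertex $v$ (whose neighbourhood grew), and at $v$ the relations $u\prec b$ and $d\prec w$ were added precisely to supply the missing inequalities.

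The step I expect to be the real work is checking that $\prec'$ is still a strict partial order, i.e.\ that closing $\prec$ under all the added relations stays irreflexive, and doing this uniformly when $v$ is simultaneously input- and output-connected. The key points: the fresh vertices are arranged so that the only relations \emph{into} them come from $v$ and its $f$-predecessors (into the $b$'s) and from a $c$ into its $d$, while the only relations \emph{out of} fresh vertices into old ones come from the $d$'s into $N(v)$; a short trace shows no directed cycle can close. When $v$ falls out of $T$ its value must be taken on the output side, not the $d$ that already satisfies $f'(d)=v$, to avoid the $2$-cycle $v\leftrightarrow d$; fixing a global resolution order (say, all input wires before output wires) makes this choice canonical. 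Because a conflicted $v$ always lies in $T$ in $G(D)$, there is never a pre-existing value $f(v)$ to reconcile, which keeps the case analysis finite. Once each elementary move is known to preserve a causal flow, their composition yields a causal flow on the strictly graph-like diagram of Lemma~\ref{lem:weak-to-strict-graphlike}; combined with Proposition~\ref{lem:causal-to-focused-flow}, this is exactly what the proof of Lemma~\ref{lem:weakGL-gFlow} needs.
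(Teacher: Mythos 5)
Your proof is correct and takes essentially the same route as the paper's: the causal flow is extended along each inserted identity chain by pointing the new input-side spiders forward into $v$ (your $f'(c)=d$, $f'(d)=v$, matching the paper's $f_{D'}(v_{in_1})=v_{in_2}$, $f_{D'}(v_{in_2})=v$) and pointing $v$ and the new interior output-side spiders outward ($f'(v)=b$, $f'(b)=a$, matching $f_{D'}(v)=v_{out_{1,1}}$, $f_{D'}(v_{out_{i,1}})=v_{out_{i,2}}$), then closing the order transitively. Your per-wire decomposition and explicit verification of the three axioms are finer-grained than the paper's single per-node modification, and in one respect more careful: the relations $d\prec w$ for the pre-existing internal neighbours $w$ of $v$, which condition 3 of the causal-flow definition requires for the assignment $f'(d)=v$, are added explicitly in your construction but are not obviously derivable from the relations the paper lists.
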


\begin{proof}
  Let $(f_D, \prec_D)$ be a valid causal flow for $D$ and let $D'$ be the
  resulting diagram after applying rule~\ref{eq:toGL} over a node $v$.
  We construct a causal flow for $D'$ by defining a function $f_{D'}$ and
  relation $\prec_{D'}$ as the minimal objects such that
  \begin{itemize}
    \item $f_{D'} \supseteq f_D$ and $\prec_{D'} \supseteq \prec_D$.
    \item If $v$ is connected to an input in $D$, $f_{D'}(v_{in_1}) = v_{in_2}$,
      $f_{D'}(v_{in_2}) = v$, and $(v_{in_1},v_{in_2}),\allowbreak (v_{in_2},v) \in
      \prec_{D'}$.
    \item If $v$ is connected to at least one output in $D$ and $v$ is not a
      \ground-spider, $f_{D'}(v) = v_{out_{1,1}}$.
    \item For each output $i$ connected to $v$ in $D$,
      $(v,v_{out_{i,1}}), (v_{out_{i,1}},v_{out_{i,2}}) \in \prec_{D'}$
      and
      $f_{D'}(v_{out_{i,1}}) = v_{out_{i,2}}$.
  \end{itemize}
  Notice that $(f_{D'}, \prec_{D'}^+)$ is a valid focused flow for $D'$, where
  $\prec_{D'}^+$ is the transitive closure of $\prec_{D'}$. Since the
  rule~\ref{eq:toGL} preserves the focused flow, the resulting diagram
  after successive application admits a causal flow.
\end{proof}

\begin{lemma*}[\ref{lem:weakGL-gFlow}]%
  \textit{%
    If $C$ is a hybrid quantum-classical circuit and $D$ is the
    graph-like \zxGND-diagram obtained from the translation $\toZxGND{C}$ and
    Lemma~\ref{lem:weak-to-strict-graphlike}, then
    $G(D)$ admits a focused gFlow.
  }
\end{lemma*}

\begin{proof}
  By Proposition~\ref{lem:causal-to-focused-flow}, it suffices to prove
  that $G(D)$ admits a causal flow. We proceed by induction on the construction
  of $C$.
  \begin{itemize}
    \item Notice that the translation of each base constructors cannot be
      further simplified by rules $\SpiderRule$, \GndDoubleRule, or \ParHRule,
      and the underlying open graph trivially admits a causal flow.

    \item If $C = C_1 \otimes C_2$, let $D_i = \toZxGND{C_i}$.
      Since the two circuits are not connected after the composition,
      they do not interact via the rules $\SpiderRule$, \GndDoubleRule,
      or \ParHRule, and
      therefore $D = D_1 \otimes D_2$.
      By the induction hypothesis $G(D_1)$ and
      $G(D_2)$ admit some causal flow $(f_1, \prec_1)$ and $(f_2, \prec_2)$
      respectively. Then, $(f_1 \cup f_2, \prec_1 \cup \prec_2)$ is a causal
      flow for $G(D)$.

    \item If $C = C_1 \otimes C_2$, let $D_i = \toZxGND{C_i}$.
      By the induction hypothesis $G(D_1)$ and $G(D_2)$ admit some causal flow
      $(f_1, \prec_1)$ and $(f_2, \prec_2)$ respectively.
      Notice that rule $\SpiderRule$ will be applied between each output of
      $C_1$ and the connected inputs in $C_2$.
      Let $f'_1$ be a function and $\prec'_1$ a relation such that
      for each vertex $v$ in $G(D_1)$ and corresponding non-empty set of inputs
      $\{u_1, \dots, u_k\}$ in $G(D_2)$,
      $\forall v' st.\ f_1(v') = v, f'_1(v') = u_1$
      and $\forall v' st.\ v' \prec_1 v, \{(v',u_i)\}_{i=1}^k \subseteq \prec'_1$.
      Notice that since $D_2$ is weakly graph-like, $u_1$ has
      exactly one corresponding sink node in $G(D_1)$.
      Additionally, for all $v_1, v_2$ non-output nodes in $G(D_1)$
      let $f'_1(v_1)=v_2$ if $f_1(v_1)=v_2$
      and let $v_1\prec'_1 v_2$ if $v_1\prec_1 v_2$.
      Then, $(f'_1 \cup f_2, (\prec'_1 \cup \prec_2)^+)$
      is a causal flow for $G(D)$.
  \end{itemize}
  Then, by Lemma~\label{lem:weak-to-strict-graphlike-flow}, the application of
  Lemma~\ref{lem:weak-to-strict-graphlike} preserves the existence of a causal
  flow.
\end{proof}

\subsection{Proofs of section~\ref{sec:extraction}}%

\begin{lemma*}[\ref{lem:extraction-gflow}]%
  \textit{%
  Each step of the while loop in Algorithm~\ref{algo:extraction},
  line~\ref{algo:extraction:while}, preserves the gFlow of the diagram.
  }
\end{lemma*}

\begin{proof}
  By Proposition~\ref{lem:row-sum-extraction}, the Gauss elimination application
  preserves the gFlow.
  Then, look at the set of non-frontier spiders maximal in the gFlow order.

  If the set contains a non-\ground spider $u$ then by definition $g(u) \in
  \frontier$. If $\odd_G(g(u))$ does not contain \ground nodes, after the Gauss
  elimination there must be a frontier spider $v$ such that it is only connected
  to $u$. Therefore, removing $v$ and making $u$ a sink of the diagram does not
  break the gFlow.

  On the other case, since \ground-spiders are always sinks of the diagram
  promoting them to the frontier does not modify the gFlow of the diagram.

  Finally, the call to \textsc{CleanFrontier} does not modify the gFlow.
\end{proof}

% Another subsection for the last proofs
\subsection{Proofs of Section~\ref{sec:classicalization}}%

For the following proofs we use the interpretation of \zxGND\ diagrams defined
in Appendix~\ref{sec:semantics}.

\begin{lemma}%
  \label{lem:localClassicalization:validHadam}
  The labelling rule \ClassHadamRule\ preserves the validity of the labelling.
\end{lemma}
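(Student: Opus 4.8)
The plan is to reduce the claim to one semantic fact about the Hadamard matrix together with the observation that validity of a labelling is a condition checked edge by edge, using the interpretation $\interpretZX{\cdot}$ of Appendix~\ref{sec:semantics}. Recall that \ClassHadamRule\ acts on a sub-diagram consisting of one explicit Hadamard gate with its two incident wires, refining the label recorded at the Hadamard-end of each wire by intersecting it (via $\sqcup$) with $\hadamLbl$ of the label recorded at the Hadamard-end of the other wire, while leaving every other label of the diagram — in particular all labels at inputs, outputs and grounds — untouched.

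First I would establish the algebraic fact that conjugation by the Hadamard matrix $H$ carries each label subspace onto the one named by $\hadamLbl$: for every $A\in\labels$, $\{\,H\rho H^\dagger\mid\rho\in\interpretLbl{A}\,\}=\interpretLbl{\hadamLbl(A)}$. This is a five-case verification: $H$ interchanges $\ket0\leftrightarrow\ket+$ and $\ket1\leftrightarrow\ket-$, hence swaps $\interpretLbl{\Z}$ and $\interpretLbl{\X}$; it fixes the $Y$-basis as a set (up to global phase $H\ketR=\ketL$ and $H\ketL=\ketR$), hence fixes $\interpretLbl{\Y}$; the maximally mixed state $\tfrac12\ketbra00+\tfrac12\ketbra11$ is invariant under conjugation by $H$, so $\interpretLbl{\bot}$ is fixed; and $\interpretLbl{\Q}=\C^{2\times2}$ is trivially fixed — this is precisely why $\hadamLbl$ is defined as it is. Since $\interpretLbl{A\sqcup B}=\interpretLbl{A}\cap\interpretLbl{B}$, the refined label $A':=A\sqcup\hadamLbl(B)$ satisfies $\interpretLbl{A'}=\interpretLbl{A}\cap\{H\rho H^\dagger\mid\rho\in\interpretLbl{B}\}$, and symmetrically on the other side of the gate.

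Next I would unfold the definition of a valid labelling: $L$ is valid iff for \emph{every} edge $e$, cutting $e$ and forcing valid states at all inputs and outputs yields a completely positive map whose output on the two severed terminals lies in $\interpretLbl{E}\otimes\interpretLbl{F}$, where $E,F$ are the labels $L$ assigns at those terminals. The condition attached to a fixed $e$ involves only the labels at the inputs, outputs, and at $e$ itself. Since \ClassHadamRule\ changes no input/output label and changes the label at no edge except the two incident to the rewritten Hadamard, for every other edge the condition under the rewritten labelling $L'$ is verbatim the condition under $L$, which holds by hypothesis; so it remains only to treat the cuts of the two Hadamard-incident wires.

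For those I would argue as follows. Fix the wire $w$ on one side of $H$; under $L'$ its severed ends carry the unchanged outer label and the refined label $A'=A\sqcup\hadamLbl(B)$, where $B$ is the Hadamard-end label of the wire $w'$ on the other side. By validity of $L$ at $w$, the output state on the $H$-side terminal already lies in $\interpretLbl{A}$, so it suffices to place it additionally in $\{H\rho H^\dagger\mid\rho\in\interpretLbl{B}\}$. Comparing the cut of $w$ with the cut of $w'$, the two bent diagrams realise the same CP map except that the lone Hadamard lies on the cut side in one and not in the other, so the two $H$-side output states differ exactly by $\rho\mapsto H\rho H^\dagger$; applying validity of $L$ at $w'$ (that state lies in $\interpretLbl{B}$) and pushing through $H$ via the first step gives the missing membership, hence membership in the intersection $\interpretLbl{A'}$. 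The symmetric argument handles $w'$, and if a Hadamard-incident wire is a boundary wire its boundary-end label is untouched and the same comparison goes through. The one genuinely delicate point, which I expect to be the main obstacle, is making precise in terms of $\interpretZX{\cdot}$ that the cut-at-$w$ and cut-at-$w'$ bent diagrams agree up to post-composition with conjugation by $H$; everything else is the five-case check of the first step and routine bookkeeping with $\sqcup$ as intersection. Combined with the analogous statement for \ClassZRule, this is what is needed for Lemma~\ref{lem:localClassicalization}.
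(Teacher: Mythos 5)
Your proposal is correct and follows essentially the same route as the paper's own proof: reduce validity of the refined label $B \sqcup \hadamLbl(A)$ to validity of $\hadamLbl(A)$ alone (using that the starting labelling is valid and $\sqcup$ is interpreted as intersection), observe that only the Hadamard-incident cuts need re-checking, and then note that the cut on the far side of the Hadamard differs from the near-side cut exactly by conjugation with $H$, so everything follows from $H\,\interpretLbl{A}\,H^\dagger \subseteq \interpretLbl{\hadamLbl(A)}$. The paper dispatches that last inclusion with ``follows from the definition of $\hadamLbl$'' where you spell out the five cases, and it likewise leaves the bent-diagram comparison implicit in the functoriality of $\interpretZX{\cdot}$, so your extra detail is sound but not a different argument.
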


\begin{proof}
  We prove the validity of the replacement of label $B$. The replacement of
  label $C$ is the symmetric case.

  Since the starting diagram has a valid labelling,
  $B \sqcup \hadamLbl(A)$ is a valid label if $\hadamLbl(A)$ is a valid label.
  Therefore the label is valid if
  \(\forall \rho \in \interpretLbl{A} \otimes \interpretLbl{D}\),
  \[
    \left\llbracket\tikzfig{classical/had-proof-bd}\right\rrbracket
    \circ \rho
    \in \interpretLbl{\hadamLbl(A)} \otimes \interpretLbl{D}
  \]
  Notice that this is equivalent to requiring
  $\forall a \in \interpretLbl{A}, (\H\,a\,\H^\dagger) \in
  \interpretLbl{\hadamLbl(A)}$,
  which follows from the definition of $\hadamLbl(\cdot)$.
\end{proof}

\begin{lemma}%
  \label{lem:localClassicalization:validZ}
  The labelling rule \ClassZRule\ preserves the validity of the labelling.
\end{lemma}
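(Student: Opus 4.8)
The plan is to follow the same template as the proof of Lemma~\ref{lem:localClassicalization:validHadam}. The rule \ClassZRule\ rewrites only the labels attached at the spider-end of the legs of a single phased Z-spider, and each such rewrite replaces a leg label $E$ by its meet (in the $\sqcup$ sense, i.e.\ by intersection of the associated subspaces) with the contribution $\rot(A_1 \tJoin \cdots \tJoin A_k)$ propagated from the other legs through the spider, where $A_1,\dots,A_k$ are the old labels on those legs and a leg discarded into a \ground\ contributes the constant $\bot$. Since the starting labelling is valid and $\interpretLbl{E \sqcup B} = \interpretLbl{E} \cap \interpretLbl{B}$, it is enough to check that the new component $\rot(A_1\tJoin\cdots\tJoin A_k)$ is by itself a valid label for that leg; by the leg-symmetry of the Z-spider this reduces to a single representative case, exactly as in the Hadamard lemma where ``the replacement of label $C$ is the symmetric case''.

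First I would unfold validity for that one leg. Rearranging the diagram so that every boundary becomes an input and the chosen leg becomes the sole output, the claim reduces to the following local statement: for every $\rho_1 \otimes \cdots \otimes \rho_k$ with $\rho_i \in \interpretLbl{A_i}$, the image under the interpretation of the phased Z-spider --- including the variants in which some of the legs are composed with the \ground\ interpretation $\ketbra00+\ketbra11$ (cf.\ Appendix~\ref{sec:semantics}) --- lands in $\interpretLbl{\rot(A_1\tJoin\cdots\tJoin A_k)}$ on the distinguished leg, tensored with identity or maximally-mixed factors on the remaining ones. The base case $k=1$ is a degree-$2$ spider, i.e.\ a \Z[\alpha] gate, and there the statement is precisely $(\Z[\alpha]\,a\,\Z[\alpha]^\dagger) \in \interpretLbl{\rot(A)}$ for all $a \in \interpretLbl{A}$ and all $A \in \labels$, which is immediate from the case definition of $\rot$.

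For $k\ge 2$ I would induct on $k$, using spider fusion \SpiderRule\ to split the $k$-legged phased spider as a phaseless $2$-to-$1$ spider feeding into the $(k{-}1)$-legged phased spider. The $\alpha=0$ instance of the degree-$2$ step together with the fact that $(\labels,\tJoin)$ is a commutative monoid shows that combining two legs labelled $A$ and $B$ produces a state in $\interpretLbl{A\tJoin B}$ on the internal wire; the induction hypothesis then produces a state in $\interpretLbl{\rot(A_1\tJoin(A_2\tJoin\cdots\tJoin A_k))}$, and the inequality $\rot(A)\tJoin B \ge \rot(A\tJoin B)$ together with the monotonicity of the label-to-subspace assignment ($A\ge B$ implies $\interpretLbl{A}\supseteq\interpretLbl{B}$) lets us weaken this to the claimed label. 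A leg carrying a \ground\ is handled uniformly by the observation that a Z-spider with a discarded leg is a computational-basis measurement (Appendix~\ref{sec:semantics}); this is exactly what the $\bot$-entries of the $\tJoin$ table --- in particular $\bot\tJoin\Q=\Z$ and $\bot\tJoin\bot=\bot$ --- are engineered to encode.

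The main obstacle is bookkeeping rather than conceptual difficulty: one must check every entry of the $\tJoin$ and $\rot$ tables against the semantics of Appendix~\ref{sec:semantics}, paying particular attention to the borderline entries such as $\bot\tJoin\Q=\Z$ and $\rot(\X)=\Q$ when $\alpha$ is not an integer multiple of $\frac{\pi}{2}$, and to justifying that the leg-by-leg reduction is sound because the Z-spider genuinely is symmetric in all of its legs up to the global phase carried on the all-ones branch. Once the degree-$2$ and discarded-leg base cases are settled, the induction closes cleanly, so the bulk of the written-up proof would consist of the two case tables together with a one-line appeal to this monotonicity.
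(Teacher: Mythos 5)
Your proposal is correct and follows essentially the same route as the paper's own proof: an induction on the arity of the phased Z-spider, with the degree-2 case reduced to the definition of $\rot$, the merging of two legs through a phaseless spider reduced to the definition of $\tJoin$, and the higher-arity cases obtained by splitting off legs via spider fusion \SpiderRule\ and invoking the inductive hypothesis. The only case the paper treats that your induction (which starts at one non-distinguished leg) does not explicitly reach is the degree-1 spider, where one checks directly that $\left(\begin{smallmatrix}1 & e^{-i\alpha}\\ e^{i\alpha} & 1\end{smallmatrix}\right) \in \interpretLbl{\rot(\X)}$; your convention that the empty $\tJoin$ is the monoid unit $\X$ covers it, but it deserves an explicit line.
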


\begin{proof}
  By induction on the number of wires.

  \begin{itemize}
    \item If $n=0$, there are no labelled wires.
    \item If $n=1$, since the starting diagram has a valid labelling,
      $B_1 \sqcup \rot(\X)$ is a valid labelling if $\rot(\X)$ is valid.
      Notice that
      \[
        \left\llbracket\tikzfig{classical/Zn-p-proof-n1-}\right\rrbracket =
        \begin{pmatrix}
          1 & e^{-i\alpha} \\
          e^{i\alpha} & 1
        \end{pmatrix}
        \in
        \rot(\X)
      \]

    \item If $n=2$, we prove the validity of the replacement of label $B_2$. The
      replacement of label $B_1$ is the symmetric case.

      Since the starting diagram has a valid labelling,
      $B_2 \sqcup \rot(A_1)$ is a valid label if $\rot(A_1)$ is a valid label.
      Therefore the label is valid if
      \(\forall \rho \in \interpretLbl{A_1} \otimes \interpretLbl{B_1}\),
      \[
        \left\llbracket\tikzfig{classical/Z2-proof-b2}\right\rrbracket
        \circ \rho
        \in \interpretLbl{\rot(A_1)} \otimes \interpretLbl{D}
      \]
      Notice that this is equivalent to requiring
      $\forall a \in \interpretLbl{A_1}, (U\,a\,U^\dagger) \in
      \interpretLbl{\rot(A_1)}$
      for $U = \ketbra{0}{0} + e^{i\alpha}\ketbra{1}{1}$,
      which follows from the definition of $\rot(\cdot)$.

    \item If $n=3$, we prove the validity of the replacement of label $B_1$. The
      replacement of labels $B_2$ and $B_3$ are the symmetric case.

      Since the starting diagram has a valid labelling, $B_1 \sqcup \rot(A_2
      \tJoin A_3)$ is a valid label if $\rot(A_2 \tJoin A_3)$ is a valid label.
      We can split the diagram as follows, adding an intermediary label.
      \[
        \tikzfig{classical/Z3-proof-b1-alpha}
        \;\xrightarrow{\spiderrule}\;
        \tikzfig{classical/Z3-proof-b1-splitted}
      \]
      Notice that by the inductive hypothesis, the labelling step for the
      degree-2 spider is correct. Therefore it suffices to prove that the
      intermediary label $A_2 \tJoin A_3$ is valid, that is
      \(\forall \rho \in \interpretLbl{A_2 \tJoin A_3}\otimes \interpretLbl{\Q}\),
      \[
        \left\llbracket\tikzfig{classical/Z3-proof-b1}\right\rrbracket
        \circ \rho
        \in \interpretLbl{(A_2 \tJoin A_3)} \otimes \interpretLbl{\Q}
      \]
      Notice that this is equivalent to requiring
      $\forall a_2 \in \interpretLbl{A_2}, a_3 \in \interpretLbl{A_3}$,
        $(U\,(a_2 \otimes a_3)\,U^\dagger) \in \interpretLbl{A_2 \tJoin A_3}$
      for $U = \ketbra{0}{00} + \ketbra{1}{11}$,
      which follows from the definition of $\tJoin$.

    \item If $n>3$, for each $i$ and for some $k\neq i$ we can rewrite the
      diagram as follows, and apply \ClassZRule\ twice to produce the target
      $B_i \sqcup \rot(\bigstar_{j\neq i} A_j)$ label.
      \[
        \tikzfig{classical/Zn-p-proof-n4-}
        \twoheadrightarrow\;
        \tikzfig{classical/Zn-p-proof-n4b-}
      \]
      By inductive hypothesis, both rule applications produce valid labellings.
  \end{itemize}
\end{proof}

\begin{lemma*}[\ref{lem:localClassicalization}]
  \textit{The local-search labelling algorithm produces a valid labeling
  according to the standard interpretation of the \zxGND\ calculus.}
\end{lemma*}

\begin{proof}
  Notice that labelling every wire as \Q\ is always valid and hence
  the algorithm starts with a valid labelling. By
  Lemmas~\ref{lem:localClassicalization:validHadam}
  and~\ref{lem:localClassicalization:validZ}, each step applying rules
  \ClassZRule\ and \ClassHadamRule\ preserve the validity and therefore the
  final labelling is valid.
\end{proof}

\section{Rule derivations}%
\label{sec:derivations}

Here we present the derivation of rules presented in
Section~\ref{sec:optimization}.

\begin{itemize}
  \item Rule~\LocalCompGndRule:
\[
  \scalebox{0.85}{\tikzfig{simpl/gnd-rewriting/gndComplement-a}}
  \ \overset{\GndDiscardRule}{=}\ 
  \scalebox{0.85}{\tikzfig{simpl/gnd-rewriting/gndComplement-1}}
  \ \overset{\LocalCompRule}{=}\ 
  \scalebox{0.85}{\tikzfig{simpl/gnd-rewriting/gndComplement-b}}
\]

%%%%%%%%%%%%%%%%%%

  \item Rule~\PivotGndRule:
\[
  \scalebox{0.85}{\tikzfig{simpl/gnd-rewriting/gndPivot-a}}
  \quad\overset{\GndDiscardRule}{=}\quad
  \scalebox{0.85}{\tikzfig{simpl/gnd-rewriting/gndPivot-1}}
\]\[
  \quad\overset{\PivotRule}{=}\quad
  \scalebox{0.85}{\tikzfig{simpl/gnd-rewriting/gndPivot-2}}
  \quad=\quad
  \scalebox{0.85}{\tikzfig{simpl/gnd-rewriting/gndPivot-b}}
\]

%%%%%%%%%%%%%%%%%%

  \item Rule~\PauliPivotGndRule:
\[
  \scalebox{0.85}{\tikzfig{simpl/gnd-rewriting/gndDeletion-a}}
  \qquad\overset{\GndDiscardRule}{=}\qquad
  \scalebox{0.85}{\tikzfig{simpl/gnd-rewriting/gndDeletion-1}}
\]\[
  \qquad\overset{\PivotRule}{=}\qquad
  \scalebox{0.85}{\tikzfig{simpl/gnd-rewriting/gndDeletion-b}}
\]

%%%%%%%%%%%%%%%%%%

  \item Rule~\RowSumGndRule:
\[
  \scalebox{0.85}{\tikzfig{simpl/gnd-row-sum-0}}
  \quad\overset{\GndCNOTRule}{=}\quad
  \scalebox{0.85}{\tikzfig{simpl/gnd-row-sum-1}}
\]
\[
  \quad\overset{\LocalCompGndRule_u}{=}\quad
  \scalebox{0.85}{\tikzfig{simpl/gnd-row-sum-2}}
  \quad\overset{\GndCNOTRule}{=}\quad
  \scalebox{0.85}{\tikzfig{simpl/gnd-row-sum-3}}
\]
\[
  \quad\overset{\LocalCompGndRule_u}{=}\quad
  \scalebox{0.85}{\tikzfig{simpl/gnd-row-sum-4}}
\]
\end{itemize}

  \section{Extraction examples}%
\label{sec:extraction-example}

We show here an example of running the extraction procedure on a diagram. The
frontier set is demarcated with a dashed red box, and the extracted circuit is
represented with their ZX equivalent directly connected to the right of the
frontier.

We start with the frontier initialized as the output vertices, and
directly extract any \ground.
\[
  \tikzfig{extraction/step1-a}
  \;\xrightarrow{\substack{Clean\\frontier}}\;
  \tikzfig{extraction/step1-a-1}
\]

During each step of the algorithm, a maximal non-extracted element in the gFlow
order is chosen.
Candidates can be chosen efficiently without calculating the gFlow by running
Gauss elimination on the biadjacency matrix between the border and
the non-extracted spiders, and mirroring the row-sum operations using a gFlow
preserving rewrite rule on the diagram.
\[
  \;\xrightarrow{Gauss}\;
  \tikzfig{extraction/step1-b}
  \;\xrightarrow{Extract}\;
  \tikzfig{extraction/step1-c}
\]
\[
  \;\xrightarrow{Extract}\;
  \tikzfig{extraction/step1-d}
\]
Now, there are no candidate frontier spiders connected to a single unextracted
spider. We must therefore extract one of the connected \ground-spiders as a
qubit termination.
\[
  \;\xrightarrow{\substack{Extract\\\ground}}\;
  \tikzfig{extraction/step1-e}
  \;\xrightarrow{\substack{Clean\\frontier}}\;
  \tikzfig{extraction/step1-f}
\]
\[
  \;\xrightarrow{Extract}\;
  \tikzfig{extraction/step1-g}
\]

If after any step there are nodes in the frontier that are not connected to any
internal spider then they can be removed from the frontier and extracted as a
qubit initialization, as show in the following example.
\[
  \tikzfig{extraction/border-qubitInit-b}
  \;\to\;
  \tikzfig{extraction/border-qubitInit-c}
\]

\fi

\end{document}

%%% Local Variables:
%%% mode: latex
%%% TeX-master: t
%%% End: